
\documentclass[sigconf]{acmart}

\usepackage [ n, advantage , operators , sets , adversary , landau , probability , notions , logic , ff, mm, primitives , events , complexity , asymptotics , keys]{ cryptocode}
      
\usepackage{amsfonts}
\usepackage{color, float}
\usepackage{appendix}
\usepackage{amsmath,amsthm,bm}
\usepackage{amssymb, longtable}
\setcounter{tocdepth}{3}
\usepackage{graphicx,xypic,tikz,pgfplots}
\usepackage{algorithm}
\usepackage{algorithmicx}
\usepackage{algpseudocode}
\usepackage{url,hyperref,enumerate}
\usepackage{float}

\usepackage{hyperref}
\usepackage{xspace}

\newcommand{\F}{{\mathbb F}}
\newcommand{\Z}{{\mathbb Z}}
\newcommand{\R}{{\mathbb R}}
\newcommand{\Grp}{{\mathbb G}}

\newcommand{\A}{{\mathcal A }}
\newcommand{\Cp}{{\mathcal C }}
\newcommand{\DS}{{\mathcal D }}
\newcommand{\E}{\textrm{E}}
\newcommand{\D}{\textrm{D}}
\newcommand{\Enc}{\textsf{Enc}}
\newcommand{\Dec}{\textsf{Dec}}

\newcommand{\siml}{ \mathcal{S} }

\newcommand{\rbu}{\textbf{r}}

\newcommand{\xbu}{\textbf{x}}
\newcommand{\ybu}{\textbf{y}}
\newcommand{\zbu}{\textbf{z}}
\newcommand{\tbu}{\textbf{t}}
\newcommand{\Xbu}{\textbf{X}}

\newcommand{\gdbu}{{\bf err}}

\newcommand{\sbu}{{\bf s}}
\newcommand{\gd}{{\tt GD}}
\newcommand{\us}{\mathcal{U}}
\newcommand{\vs}{\mathcal{V}}
\newcommand{\func}{{\mathcal F}}

\newcommand{\thetabu}{{\bm \theta}}
\newcommand{\omegabu}{{\bm \omega}}
\newcommand{\mubu}{{\bm \mu}}
\newcommand{\sigmabu}{{\bm \sigma}}

 \newcommand{\regsys}{\textsf{PrivFL}}

\newtheorem{definition}{Definition}
\newtheorem{theorem}{Theorem}
\newtheorem{lemma}{Lemma}

\usepackage{pgfplots,subfig}
\usepgfplotslibrary{groupplots}

%

\makeatletter
\newcommand{\figcaption}{\def\@captype{figure}\caption}
\newcommand{\tabcaption}{\def\@captype{table}\caption}
\makeatother

\def\hlinewd#1{%
\noalign{\ifnum0=`}\fi\hrule \@height #1 %
\futurelet\reserved@a\@xhline}

\makeatletter
\def\hlinewd#1{%
  \noalign{\ifnum0=`}\fi\hrule \@height #1 \futurelet
   \reserved@a\@xhline}
\makeatother

\makeatletter
\def\BState{\State\hskip-\ALG@thistlm}
\makeatother

\makeatletter
\newcommand*{\bigcdot}{}
\DeclareRobustCommand*{\bigcdot}{%
  \mathbin{\mathpalette\bigcdot@{}}%
}
\newcommand*{\bigcdot@scalefactor}{.5}
\newcommand*{\bigcdot@widthfactor}{1.15}
\newcommand*{\bigcdot@}[2]{%
  \sbox0{$#1\vcenter{}$}
  \sbox2{$#1\cdot\m@th$}%
  \hbox to \bigcdot@widthfactor\wd2{%
    \hfil
    \raise\ht0\hbox{%
      \scalebox{\bigcdot@scalefactor}{%
        \lower\ht0\hbox{$#1\bullet\m@th$}%
      }%
    }%
    \hfil
  }%
}
\makeatother

\newcommand{\sstm}{\textbf{Share}_m^t}
\newcommand{\rectm}{\textbf{Rec}_m^t}

\usepackage{makecell}
\usepackage[linewidth=0.5pt]{mdframed}

\usepackage{diagbox}
\usepackage{xcolor}




\settopmatter{printacmref=false}

\makeatletter
\def\@copyrightspace{\relax}
\makeatother

\setcopyright{none} 
\renewcommand\footnotetextcopyrightpermission[1]{}

\settopmatter{printfolios=true}

\begin{document}
\fancyhead[LE,RO]{}

\title{\texttt{PrivFL}: Practical Privacy-preserving Federated Regressions on High-dimensional Data over Mobile Networks}
%
\author{Kalikinkar Mandal}
\affiliation{%
 \institution{University of Waterloo}
 \streetaddress{200 University Ave W.}
 \city{Waterloo}
 \state{Ontario}
 \postcode{N2L 3G1}}
\email{kmandal@uwaterloo.ca}
\author{Guang Gong}
\affiliation{%
 \institution{University of Waterloo}
 \streetaddress{200 University Ave W.}
 \city{Waterloo}
 \state{Ontario}
 \postcode{N2L 3G1}}
 \email{ggong@uwaterloo.ca}

\begin{abstract}
Federated Learning (FL) enables a large number of users to jointly learn a shared machine learning (ML) model, coordinated by a centralized server, where the data is distributed across multiple devices. 
This approach enables the server or users to train and learn an ML model using gradient descent, while keeping all the training data on users' devices. 
We consider training an ML model over a mobile network where {\it user dropout} is a common phenomenon.
Although federated learning was aimed at reducing data privacy risks, the ML model privacy has not received much attention. 

In this work, we present \regsys, a privacy-preserving system for training (predictive) linear and logistic regression models and oblivious predictions in the federated setting, while guaranteeing data and model privacy as well as ensuring robustness to users dropping out in the network. 
We design two privacy-preserving protocols for training linear and logistic regression models based on an additive homomorphic encryption (HE) scheme and an aggregation protocol. 
Exploiting the training algorithm of federated learning, at the core of our training protocols is a secure multiparty global gradient computation on alive users' data. 
We analyze the security of our training protocols against semi-honest adversaries. As long as the aggregation protocol is secure under the aggregation privacy game and the additive HE scheme is semantically secure,  \regsys\ guarantees the users' data privacy against the server, and the server's regression model privacy against the users. 
We demonstrate the performance of \regsys\ on real-world datasets and show its applicability in the federated learning system. 
\end{abstract}

\vspace{-0.15cm}
\keywords{Privacy-preserving computations, Predictive analysis, Federated learning, Machine learning} 

\maketitle

\section{Introduction}
Due to their powerful capabilities, machine learning (ML) algorithms are deployed in various applications, from mobile applications to massive-scale data centers for serving various tasks such as predictive analysis, classification, and clustering. Companies such as browser, telecom, web services, edge computing, and advertisement collect a huge amount of data from users to learn ML models for improving quality of services  and user experiences, and for providing intelligent services. 
%
%
Predictive analytics is a fundamental task in machine learning, which has many applications ranging from advertisement analytics to financial modeling, supply chain analysis, to health analytics. 
%
%
A common approach to the data analytics is that users send their data to a centralized server where the server executes machine learning algorithms on collected data. 
A major drawback of this approach is that the transparency of data processing at the server is not clear, and such centralized servers (or Artificial Intelligence (AI) platforms) are easy targets for hackers (e.g., AI.type \cite{AItype}, ). 
Recently the European Union's General Data Protection Regulation (GDPR) mandates companies to exhibit transparency in handling personal data, data processing, etc. 

Federated learning \cite{TowardsFL,McMahan-model-averaging} is a promising distributed machine learning approach, and its goal is to collaboratively learn a shared model, coordinated by a (centralized) server, while the training data remains on user devices. In federated learning, an ML model on users' data is learnt in an iterative way, which has four steps: 1) a set of users is chosen by the server to compute an updated model; 2) each user computes an updated model on its local data; 3) the updated local models are sent to the server; and 
4) the server aggregates these local models to construct a global model. With the advancement of 5G, the federated learning approach will be an attractive solution for machine leaning in edge computing.  

With the growing concern of data privacy, federated learning aimed at reducing data privacy risks by avoiding storing data at a centralized server. 
In privacy-preserving machine learning, two major privacy concerns are the user input privacy and the ML model privacy \cite{ModelStealingAttack,ModelInversionAttacks}. 
However,  the model privacy  in the federated learning setting has not received much attention. 
Besides the model leaking sensitive information about training data, the model privacy must also be protected against unauthorized use or misuse. For instance, a {\it covert user} belonging to one organization participating in a federated training process may disclose the trained model to another organization, which may lead to obtaining a better model by further training it on their dataset or may use it as a prediction service to do business. 
Although the training data remains on the device, we have found certain scenarios in which the local gradient computation {\it leaks private information} about locally stored data (see Section~\ref{sec:PrivLeakage}). 

In this work, we consider a federated machine learning setting in which a server that coordinates the machine learning process wishes to learn an ML model on a joint dataset belonging to a set of mobile users.
We consider both the mobile users' data privacy against the server, and the model privacy against the users in the federated setting. 
A {\it user dropout} is an important consideration in mobile applications, recently considered in \cite{SecureAggrePPML,secure-agg-cacr2018}. Especially, during the training phase which is a computationally intensive and time consuming task, it can happen any time from the network. 
Our goal is to design secure and dropout-robust training protocols for predictive analytics for mobile applications where, in this work, we consider two fundamental regression models, namely linear/ridge regression and logistic regression. 

In a recent work, the authors of \cite{SecureAggrePPML} considered a general problem of secure aggregation for privacy-preserving machine learning in the federated setting. 
In a follow-up work, the authors of \cite{secure-agg-cacr2018} presented an improved protocol for secure aggregation. 
No complete training protocol for any machine learning algorithm is developed in \cite{SecureAggrePPML,secure-agg-cacr2018}. 
Privacy-preserving training for linear and logistic regressions is not a new problem. 
Secure training protocols for linear regression have been proposed, e.g., in \cite{gasconpets2017,Nikolaenko:2013SP} where, unlike ours, a set of users participating in the training process is connected on a stable network. 
Secure training protocols for Machine-Learning-as-a-Service (MLaaS) for logistic regression are proposed in \cite{LogisReg,LogisReg-HE,LogisReg-Cloud,LogisticReg-HE-MedInform}. 
Although several solutions have been proposed for linear and logistic regressions using somewhat or fully homomorphic encryption (SWHE or FHE) \cite{fhe}, garbled circuit \cite{GarbledCircuit}, or hybrid techniques e.g., in \cite{ABY3}, until now, 
the regression model training over mobile networks has not been considered, under the scenario of users dropping out. 
Moreover, the suitability of expensive cryptographic primitives such as SWHE or FHE on mobile devices has not been studied. 
Our goal is to design protocols using lightweight crypto-primitives and schemes suitable for mobile devices in both training and prediction phases. 
\vspace{-0.2cm}
\subsection{Our Contributions}
We design, analyze, and evaluate \regsys, a system for privacy-preserving training and oblivious  prediction of regression models, namely linear/ridge regression and logistic regression in the federated setting.  
 
\par\noindent{\bf Dropout-robust regression training protocols.} 
We design two privacy-preserving protocols -- one is for a linear regression and another is for a logistic regression  -- for training a regression model over a mobile network while providing robustness in the event of users dropping out. 
In a nutshell, our privacy-preserving protocol for multiparty regression training consists of multiple (parallel) two-party shared local gradient computation protocols, followed by a global gradient share-reconstruction protocol  (see Section~\ref{sec:ProtocolFlow}). 
 In our protocol, the users and the server execute the following three steps: 
 1) a shared local gradient computation protocol is run between the server and a user to securely compute two (additive) shares of the local gradient on the user's data to prevent the input leakage, even if the user has a single data point;
%
  2) the server and all alive users execute an aggregation protocol to construct one share of the global gradient; and 3) the server computes the second share of the global gradient from its local gradient shares.
This offers a great flexibility for computing the global gradients robustly. 
Our regression training protocols are developed using an additive homomorphic encryption scheme and a secure aggregation protocol built using practical crypto-primitives.   
We also show how to obliviously compute regression models for prediction services for the future use of the trained models by the users.

\par\noindent{\bf Security.} We prove the security of the training protocol, in three different threat models, against semi-honest adversaries in the simulation paradigm. 
As our privacy-preserving training protocol is built upon several subprotocols, we first prove the security of the shared local gradient computation protocol for linear and logistic regression models. 
We  formally show that the security of the training protocol is based on the semantic security of the additive HE scheme that guarantees the server's model privacy  and the aggregation privacy game, as defined in \cite{TwoIsNotEnough}, which ensures users' data privacy.

\par\noindent{\bf Experimental  evaluation.} We implement and evaluate the efficiency of \regsys\ for training linear and logistic regression models on eleven real-world datasets from the UCI ML repository \cite{CreditDataset}. 
In our experiment, we use the Joye-Libert (JL) cryptosystem \cite{JL-cryptosystem} to realize the additive HE scheme as its ciphertext expansion is $2\times$ lesser and its ciphertext operations faster, when compared to Paillier's \cite{Paillier}. We implement an aggregation protocol that is a compilation of the protocols of \cite{SecureAggrePPML} and \cite{secure-agg-cacr2018}. 
As machine learning algorithms work on  floating-point numbers, we show how to encode floating-point numbers for cryptographic operations, and how to decode results obtained after applying crypto-operations.  
For training linear (resp. logistic) regression models, we perform experiments on 6 (resp. 5) different datasets of various sizes and dimensions. 
The accuracy of the trained models obtained by our protocols is very close to that of the model trained using \texttt{sklearn} (no security) \cite{sklearn}. 
We present the benchmark results for the execution time, data transfer, and storage overhead for cryptographic operations to train regression models.  
We provide a comparison of our scheme with other approaches in Section~\ref{sec:RelatedWork}. 

\section{Preliminaries}
Here, we briefly describe the regression algorithms, namely linear, and logistic regressions, federated learning, and the cryptographic schemes and protocols that we use to build our new protocols.
\vspace{0.1cm}
\par\noindent{\bf Basic notations.} We denote the message space by $\Z_{2^k}$, a ring of $2^k$ elements, and $\E$ by an additive HE encryption scheme. $\xbu = (x_1, \cdots, x_n), x_i \in \Z_{2^k}$ denotes an $n$ dimensional vector over $\Z_{2^k}$. 
\begin{itemize}
 \item For $\xbu = (x_1, \cdots, x_n)$, $\E(\xbu) = (\E(x_1), \cdots, \E(x_n))$. 
 \item $\E(\xbu)\cdot \E(\ybu) = (\E(x_1)\cdot\E(y_1), \cdots, \E(x_n)\cdot\E(y_n))$
  \item $\E(\xbu)^{\ybu} = (\E(x_1y_1), \cdots, \E(x_ny_n))$
 \item For $\thetabu = (\theta_0, \theta_1, \cdots, \theta_n)$, $\thetabu\cdot\xbu = \theta_0 + \sum_{i = 1}^n \theta_i x_i$. 
\end{itemize}

\subsection{Overview of Regression Algorithms}\label{sec:RegressionAlgos}
We now describe three regression algorithms and the gradient descent algorithm for the training process. 
Let $\DS = \big\{ (\xbu^{(i)}, y^{(i)}) \big\}_{i = 1}^d$ be a training dataset with labeled output $y^{(i)} = h(\thetabu, \xbu^{(i)})$ and input $\xbu^{(i)} = (x^{(i)}_1, x^{(i)}_2, \cdots, x^{(i)}_{n})$, where $h$ is the regression algorithm, $\thetabu$ is the model, and $n$ is the dimension of $\xbu$.  
The task in the training process is to learn the model $\thetabu$, given $\DS$ and $h$. 

\par\noindent{\bf Gradient Descent Algorithm.} Gradient descent (GD) is an iterative optimization algorithm for minimizing a cost function. Given a dataset $\DS$, the cost function is defined as
$J(\thetabu) = \frac{1}{|\mathcal{B}|} \sum_{(\xbu, y) \in \mathcal{B}} C(\thetabu, (\xbu, y) )$ where $C(,\cdot,)$ is a cost function and $\mathcal{B} \subseteq \DS$. An optimal $\thetabu$ is computed iteratively as 
$$\thetabu^{i+1} \leftarrow \thetabu^{i} -\eta \nabla J(\thetabu^{i}), i \geq 0$$
where $\thetabu^{0}$ is initialized with a random value or all-zero, $\eta$ is the learning rate, and $\nabla J$ is the gradient of $J$ over $\mathcal{B}$. 

\par\noindent{\bf Federated Learning.} 
Federated learning enables a large number of users to collaboratively learn a model while keeping all training data on their devices where the model is trained under the coordination of a central server \cite{FederatedLearning,McMahan-model-averaging-deep-learning}. 
We assume that there are $m$ users, denoted by $P_i$, with a set of data points, denoted by $\DS_i$. The cost function for the joint dataset $\DS = \cup_{i = 1}^{m} \DS_i$ is given by  
$J(\thetabu) = \sum_{i = 1}^m \frac{d_i}{d}F_i(\thetabu)$ 
where $F_i(\thetabu) = \frac{1}{d_i} \sum_{(\xbu,y) \in \DS_i} C(\thetabu, (\xbu, y))$ and $d = \sum_{i = 1}^m d_i$. The model update on the total training set $\DS$ is performed as \cite{McMahan-model-averaging-deep-learning}
$$\thetabu^{i+1} \leftarrow \thetabu^{i} -\eta \sum_{j = 1}^m \frac{d_j}{d} \nabla F_j(\thetabu^{i}), i \geq 0$$
where $\nabla F_j$ is the {\it local} gradient value on $\DS_j$. In federated averaging \cite{McMahan-model-averaging-deep-learning}, the users are chosen randomly. 

\par\noindent{\bf Linear Regression.} 
For a linear regression model, the function $h(\cdot, \cdot)$ is an affine function, which is defined as $h(\thetabu, \xbu^{(i)}) = \thetabu\cdot\xbu^{(i)} = \theta_0 + \sum_{j = 1}^{n}\theta_j x_j^{(i)}$. 
The model parameter $\thetabu$ is obtained by optimizing the following cost function  
$$J(\thetabu) = \frac{1}{d}\sum_{i = 1}^d ( h(\thetabu, \xbu^{(i)})  - y^{(i)})^2.$$
\vspace{-0.15cm}
\par\noindent{\bf Ridge Regression.}
For the ridge regression, the function $h$ is the same as the function of linear regression, but the model parameter is obtained by optimizing the following cost function 
$$J(\thetabu) =  \frac{1}{d} \sum_{i = 1}^{d} (h(\thetabu, \xbu^{(i)}) - y^{(i)})^2 + \lambda || \thetabu||^2$$
where $\lambda$ is the regularization parameter.

\par\noindent{\bf Logistic Regression.}
The logistic regression is a binary linear classifier, which maps an input $\xbu^{(i)}$ from the feature space to a value in $[0,1]$ as follows: 
$$h(\thetabu, \xbu^{(i)}) = \sigma(\thetabu \cdot \xbu^{(i)}) = \frac{1}{1 + e^{-(\sum_{ i = j}^n \theta_i x_j^{(i)} + \theta_0)}}$$  
where $\sigma(z) = \frac{1}{1+e^{-z}}$ is the sigmoid function.  The binary class is decided based on a threshold value. 

For the logistic regression, the model parameter $\thetabu$ is obtained by optimizing the following log-likelihood cost function  
$$J(\thetabu) = -\frac{1}{d}\sum_{i = 1}^d  y^{(i)}\log( h(\thetabu, \xbu^{(i)}) ) + (1-y^{(i)})\log( 1- h(\thetabu, \xbu^{(i)}) ).$$

\subsection{Multiparty Federated Regression Training and Oblivious  Prediction}\label{sec:muliparty-fed-learn}
\par\noindent{\bf Multiparty Gradient Computation.} 
Assume that there are $m$ users, and each user has a dataset $\DS_i, 1 \leq i \leq m$. A server wishes to learn a regression model  $\thetabu$ on the joint dataset $\mathcal{D} = \DS_1 \cup \cdots \cup \DS_m = \big\{(\xbu^{(i)}, y^{(i)})\big\}$ (horizontally distributed). 
To train a model based on the gradient descent algorithm, the server jointly computes the gradient $\omegabu$ on $\mathcal{D}$ as 
$$\omegabu  = (\omega_0, \omega_1, \cdots, \omega_n)= \nabla_{\thetabu} J(\thetabu) = \gd( \thetabu, \DS, d)$$
where $d = |\DS|$, $\omega_0 = \sum_{i = 1}^d e^{(i)}$, $ \omega_j = \sum_{i = 1}^d e^{(i)}x_{j}^{(i)}, 1 \leq j \leq n$, $e^{(i)} = h(\thetabu, \xbu^{(i)})  - y^{(i)}$ and $h$ is a (linear, ridge, or logistic) regression function.  
We denote the gradient computation on a single data point $(\xbu^{(i)}, y^{(i)})$ by $\gdbu^{(i)} = f( \thetabu, (\xbu^{(i)}, y^{(i)}) ) = (e^{(i)}, e^{(i)}x_1^{(i)}, \cdots, e^{(i)}x_n^{(i)})$ with $e^{(i)} = h(\thetabu, \xbu^{(i)})  - y^{(i)}$. 
Knowing the model $\thetabu$, each user can compute the local gradient $\omegabu_i$ on $\DS_i$ as $\omegabu_i = \sum_{ (\xbu, y) \in \DS_i } \gdbu^{(i)} =  \sum_{ (\xbu, y) \in \DS_i }  f( \thetabu, (\xbu, y) )$. 
Thus, the global gradient can be written as $\omegabu = \sum_{i = 1}^m \omegabu_i =   \sum_{j = 1}^d \gdbu^{(j)}$.  Using the gradient $\omegabu$ over $\DS$, the server can update the model as $\thetabu \leftarrow \thetabu - \frac{\eta}{d} \cdot \omegabu$ for linear and logistic regressions, and $\thetabu \leftarrow (1-2\lambda \eta)\thetabu - 2\eta \cdot \omegabu$ for ridge regression  for some regularization parameter $\lambda$.
Secure gradient computation enables the server to learn only $\omegabu$, while ensuring users' input ($\DS_i$) privacy against the server, and the server's model privacy against the users. 

\par\noindent{\bf Oblivious Regression Prediction.} Oblivious prediction for neural networks was introduced in \cite{liu-onn-ccs-2017}. An oblivious prediction for regression models is defined as follows. Assume that the server holds a (private) regression model $\thetabu$, and a user has a private input $\xbu$ and wishes to learn the predicted value $h(\thetabu, \xbu)$ for a regression function $h$ with model $\thetabu$. 
In an oblivious prediction computation, the user will learn only the predicted value  $h(\thetabu, \xbu)$ without leaking any information about $\xbu$ to the server, and the user should not learn any information about $\thetabu$, except what can be learnt from the output. 

\subsection{Cryptographic Tools} 
We now provide a description of the cryptographic schemes that we will use to construct our protocols. 
\par\noindent{\bf Additive Homomorphic Encryption.}
An additive homomorphic encryption (HE) scheme consists of a tuple of four algorithms, denoted as $\textsf{HE} = (\textsf{KeyGen}, \textsf{Enc}, \textsf{Eval}, \textsf{Dec})$ where
\begin{itemize}
\item \textsf{KeyGen}: Given a security parameter $\nu$, it generates a pair of private and public keys $(pk, sk) \leftarrow \textsf{KeyGen}(1^{\nu})$. 
\item \textsf{Enc}: It is a randomize algorithm, denoted by $\E$, takes the public key $pk$, a random coin $r$ and the message $m \in \Z_{2^k}$ as input, and outputs a ciphertext $c = \E(pk; m, r)$. 
\item \textsf{Eval}: It supports the following operations on ciphertexts: 1) Addition:  $\textsf{Add}(\E(x), \E(y)) = \E(x+y)$; and 2) Constant multiplication: $\textsf{ConstMul}(\E(x), z) = \E(xz)$.
\item \textsf{Dec}: The algorithm, denoted by $\D$, takes the secret key $sk$ and a ciphertext $c$ as input, and outputs a plaintext message $m = \D(sk; c)$. 
\end{itemize} 
We use an additive HE scheme that is semantically  secure, and $\textsf{Add}(,)$ and $\textsf{ConstMul}(,)$ are ciphertext multiplication and exponentiation operations, respectively.

\par\noindent{\bf Secure Aggregation Protocol.} 
We will use a secure aggregation protocol that can handle the dropout scenario as well as no dropout in an aggregated-sum computation. 
We call this protocol a {\it dropout-enabled aggregation} protocol, denoted by $\pi_{\textsc{DeA}}$. A dropout-enabled aggregation protocol accepts as input a set of users $\us$, 
their private inputs $\{x_u\}_{u \in \us}$, the total number of users $m$, and the threshold security parameter $t$, and outputs $x_{sum}$, and a set of {\it alive} users $\us_{a} \subseteq \us$ participated in the sum computation, i.e., 
$$(\us_a, x_{sum}) \leftarrow \pi_{\textsc{DeA}} \big(\us, \{x_u\}_{u \in \us}, m, t\big)$$ where $x_{sum} = \sum_{u \in \us_{a}} x_{u}$ if $|\us_{a}| \geq t$, and $\perp$ otherwise. 
Note that the aggregation scheme is secure under a coalition of up to $(t-1)$ users in the system. 
Such protocols with the dropout-enabled property have been investigate recently in \cite{SecureAggrePPML,secure-agg-cacr2018}. For more details, see \cite{SecureAggrePPML,secure-agg-cacr2018}. 
These aggregation protocols need to establish a pair of pairwise keys: one key will be used for realizing an authenticated channel, and another key will be used to encrypt private inputs. 
We provide a brief background on the cryptographic primitives that are needed to construct the $\pi_{\textsc{DeA}}$ protocol in Appendix~\ref{sec:CryptoBkgd}. 

\section{Our System Model and Goals}\label{sec:SysModel}
In this section, we describe the system model, its goals, and possible privacy leakage in the existing multiparty gradient computation.
\subsection{System Model and Trust}
\par\noindent{\bf System model.} 
We consider a system in the federated learning setting introduced in \cite{FederatedLearning,TowardsFL}.  
In this model, the system consists of two types of parties: a server and a set of mobile users or parties connected in mobile network, where the server conducts the regression training  process on mobile users' data.   
The users may {\it dropout} any time from the system, as recently considered in \cite{SecureAggrePPML,secure-agg-cacr2018}.

Assume that there are $m$ mobile users in the system, and each user has a unique identity in $[m] = \{1, 2, \cdots, m\}$. We denote the server by $S$, and a user by $P_i, i \in [m]$. 
Each user $P_i$ holds a dataset with high-dimensional points, denoted by $\mathcal{D}_i = \{(\xbu^{(j)}, y^{(j)})\}$ where $|\mathcal{D}_i| \geq 1$. We have the following system goals:
\begin{itemize}
\item {\bf Dropout-robust secure regression training:} The first goal of our system is to enable the server training a regression model $\thetabu$ (e.g., linear, ridge or logistic) over the combined dataset $\DS = \DS_1 \cup \cdots \cup \DS_m$, i.e., $\thetabu \leftarrow \textsf{Traning}( h , \DS, m)$, while allowing user dropout any time in the system, and  ensuring mobile users' input privacy and the servers' model privacy where $h$ is the regression algorithm (see Section~\ref{sec:muliparty-fed-learn}).  
\item{\bf Oblivious prediction:} The second goal is to enable a mobile user learn the predicted output $ y = h(\thetabu, \xbu)$ on its private input $\xbu$, corresponding to a regression model $\thetabu$, without leaking any information about $\xbu$ to the server and about $\thetabu$ to the user. 
\end{itemize}
Figure~\ref{fig:SystemOverview} provides an overview of our system and approach to achieve system goals. The main challenges in the training phase are: 
\begin{itemize}
 \item {\it Correctness:} For correct inputs of the users, the protocol for the regression training should output the correct model. We claim no correctness of  the regression model if any user uses an incorrect input or the server manipulates the model in the training phase. 
 \item {\it Privacy:} Our system has aimed at protecting users' inputs privacy and the server's model privacy. The server should learn no information about mobile users' private inputs in $\DS_i$. Similarly, the users should not learn anything about the model $\thetabu$, except what they can infer from the output. 
 \item {\it Efficiency:} As mobile users do not send  the private inputs out of the devices, they should perform a minimal work, and the server should perform the majority of work in the training phase. 
 The computational and communication costs of the training protocol should be minimal. 
\end{itemize}
\begin{figure*}[h]
\vspace{-0.35cm}
\centering
\begin{tikzpicture}
 \node at (-4,0) {\includegraphics[width=.1\textwidth]{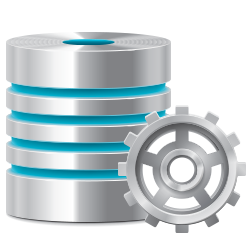}};
  \node at (-4.2,1.1) {\includegraphics[width=.06\textwidth]{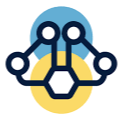}};

  \node at (-1,-3) {\includegraphics[width=.04\textwidth]{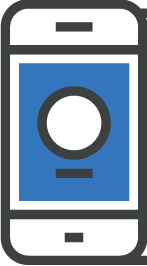}};
   \node at (-2.5,-3) {\includegraphics[width=.04\textwidth]{figs/mobile.png}};
   \node at (-4.5,-3) {$\bigcdot\bigcdot\bigcdot$};
  \node at (-6.5,-3) {\includegraphics[width=.04\textwidth]{figs/mobile.png}};
   \node at (-8,-3) {\includegraphics[width=.04\textwidth]{figs/mobile.png}};
   
    \node at (-9.0,-2.75) { {\bf User}};
   \node at (-8.5,-3.0) {\small $\rbu_1$};
   \node at (-7,-3.0) {\small $\rbu_2$};
   \node at (-3.2,-3.0) {\small $\rbu_{m-1}$};
   \node at (-1.6,-3.0) {\small $\rbu_{m}$};
   
   \node at (-8,-3.8) {\small $\DS_1$};
   \node at (-6.5,-3.8) {\small $\DS_2$};
   \node at (-2.4,-3.8) {\small $\DS_{m-1}$};
   \node at (-1,-3.8) {\small $\DS_m$};
   
   \node at (-8.2,-0.9) {\small $\DS = \DS_1 \cup \cdots \cup \DS_m$};
   
   \draw[<->, thick] (-5,-0.75) -- (-8, -2.25);
   \draw[<->, thick] (-4.5,-0.95) -- (-6.5, -2.25);
   
   \draw[<->, thick] (-3,-0.85) -- (-1, -2.25);
   \draw[<->, thick] (-3.5,-0.95) -- (-2.5, -2.25);
   
    \node at (-6.5,-1.2) { $\E(\thetabu)$};
    \node at (-7.1,-2.1) { $E(\sbu_1)$};
    
    \node at (-5.3,-1.2) { $\E(\thetabu)$};
    \node at (-3.8,-1.2) { $\E(\thetabu)$};
    \node at (-2.9,-1.2) { $\E(\thetabu)$};
    
    \node at (-5.75,-2.1) { $E(\sbu_2)$};
    \node at (-1.975,-2) { $E(\sbu_{m-1})$};
    \node at (-0.7,-2) { $E(\sbu_m)$};
    
    \node at (-4,-2.25) { $\sbu_i - \rbu_i = \omegabu_i$};
    
    \node at (-7.5,-4.1) {\small\bf (1) Training regression models};
   
   \node[draw,dashed,text width=2.5cm] at (-1.9,0.5) {Update model:\\$\thetabu \leftarrow \thetabu -\frac{\eta}{d} \sum_{i = 1}^{d}\omegabu_i$};
   \node  at (-1.5,-0.5) {$\omegabu = \sum \sbu_i - \sum \rbu_i$};
   \node at (-3,1.3) { {\bf Server}};
   
   \node[draw,dashed,text width=4.3cm] at (-7.3,0.5) {\begin{itemize} \item[1)] Establish pairwise key \item[2)] Scaling the dataset $\DS$ \item[3)] Training the model $\thetabu$ \begin{itemize} \item[-] Send enc. model $\E(\thetabu)$ \item[-] Securely compute $\omegabu$ \end{itemize} \end{itemize}};
 
  
  \draw[thick, dashed] (-9.7,-4.3) -- (0, -4.3) -- (0, 1.7) -- (-9.7, 1.7) -- (-9.7,-4.2);
 
  \node at (3.5,0) {\includegraphics[width=.1\textwidth]{figs/server.png}};
  \node at (3.2,1.2) {\includegraphics[width=.06\textwidth]{figs/ml.png}};
  \node at (3.5,-3) {\includegraphics[width=.04\textwidth]{figs/mobile.png}};
  
  \draw[->, thick] (3.3,-2.2) -- (3.3, -0.9);
  \draw[<-, thick] (3.6,-2.2) -- (3.6, -0.9); 
  \node at (2,1.2) { Model: $\thetabu$};
  \node[draw,dashed,text width=2.2cm] at (5,1.1) {Evaluating on $\xbu$:\\$y = h(\thetabu, \xbu)$};
  \node at (3,-3) { $\xbu$};
  \node at (2.9,-1.5) { $\E(\xbu)$};
  \node at (4,-1.5) { $\E(y)$};

  \draw[thick, dashed] (0.2, -4.3) -- (6.3, -4.3) -- (6.3, 1.7) -- (0.2, 1.7) -- (0.2, -4.2);
  \node at (2.7,-3.9) { \small\bf (2) Model/Prediction as a service};
\end{tikzpicture}
\vspace{-0.25cm}
\caption{An overview of the regression training process and prediction as a service over a mobile network in the federated setting. For the training phase, each user holds a dataset $\DS_i$, and the server holds an ML model $\thetabu$. Users and the server jointly compute the global gradient $\omegabu$. For the prediction service, the user holds a data point and the server holds a trained model.}
\label{fig:SystemOverview}
\vspace{-0.25cm}
\end{figure*}


\par\noindent{\bf Threat model.} 
In our system, we consider semi-honest adversaries (inside adversary) where a group of mobile users and/or the server compromised by an adversary follow and observe the prescribed actions of the protocol and aim at learning  unintended information about $\thetabu$ or honest users' $\DS_i$ from the execution  of  the  protocol. 
We consider three different threat models: 
1) users-only adversary; 2) server-only adversary; and 3) users-server adversary. 
We assume the training is conducted in a synchronous way (each iteration within a time interval of length $\Delta$), meaning all users use the correct model to compute the local gradient and the server updates the model consistently at every round of the model update. 

\subsection{Privacy Leakage and Model Privacy}\label{sec:PrivLeakage}
\par\noindent{\bf Input privacy leakage from local gradient.} 
When a user holds a single data point $(\xbu, y)$, i.e., $\DS_j = \big\{ (\xbu, y) \big\}$, 
the gradient computation on $\DS_j$, denoted by $\omegabu = (\omega_0, \omega_1, \cdots, \omega_{n})$, leaks information about the private input $\xbu$ as 
$\omega_0 = (h(\thetabu, \xbu) - y)$ and $\omega_j = (h(\xbu, \thetabu) - y)x_j = \omega_0 \cdot x_j, 1 \leq j \leq n$, where $h$ is a regression function. 
In the cases of stochastic gradient descent and federated averaging algorithm with $|\DS_i| = 1$ \cite{McMahan-model-averaging-deep-learning}, this directly allows the server to recover $x_j$ from $\omegabu$ for $\omegabu \neq {\bf 0}$. 

In the application of smart grid data aggregation,  an analysis on aggregation is performed in \cite{TwoIsNotEnough}, which experimentally finds privacy leakage in the aggregate-sum when it has a small number of inputs. 
According to the analysis of \cite{TwoIsNotEnough}, $\omega_j = \sum_{l = 1}^{d_i}(h(\thetabu, \xbu^{(l)}) - y^{(i)})x_j^{(l)}$ and $\omega_0 = \sum_{l = 1}^{d_i}(h(\thetabu, \xbu^{(l)}) - y^{(l)})$ may leak information about 
$\xbu^{(i)}$ when $d_i$ is small. In mobile applications, when mobile devices do not have enough data, it may compromise the privacy of the input dataset $\DS_i$. 
In this work, \regsys\ enables mobile users' to train a regression model without leaking their privacy, even when $|\DS_i| = 1$ and $d = \sum d_i$ is large enough to protect input privacy. 

\par\noindent{\bf Model privacy of federated learning.}  In federated learning, a substantial focus has been put on users' data privacy. 
The authors of \cite{McMahan-model-averaging-deep-learning} mentioned about achieving the model privacy using differential privacy techniques \cite{diff-privacy}. 
However, there is no such concrete proposal, and it is also not known the accuracy of the training process after applying such techniques. 
\regsys\ takes the secure MPC approach to provide the ML model privacy as well as users' data privacy.

\vspace{-0.2cm}
\section{Our Regression Protocols}
The key objective of this work is to develop secure training protocols for regression models based on cryptographic primitives suitable for mobile devices as well as robust against the user dropout scenario. 
As the training process involves a pre-processing of the training data, called the {\it scaling or normalization} of the dataset, we show how to securely perform this using an aggregation protocol while ensuring the dropout scenario. 
We start by presenting the secure scaling operation.

\subsection{Dropout-robust Secure Scaling Operation}\label{sec:scaling}
Scaling a dataset is performed by computing the mean and standard deviation of the dataset. 
Given the set of numbers $\{x_i\}$ of size $m$, the scaling is performed as $x_i \leftarrow \frac{x_i - \mu }{\sigma}$ where $\mu = \frac{1}{m}\sum_{i =1}^m x_i$ and $\sigma = \sqrt{\frac{1}{m-1}\sum_{i = 1}^m (x_i - \mu)^2}$. 
%
For a high-dimensional dataset $\mathcal{D} = \DS_1 \cup \cdots \cup \DS_m$ from users $\us$, the scaling operation is performed by scaling individual components of each data point. 
Note that scaling is performed only on $\xbu^{(i)}$ in $\DS$. 
Let $\xbu^{(i)} = \big(x_1^{(i)}, x_2^{(i)}, \cdots, x_{n}^{(i)}\big)$ be an $n$-dimensional data point. The mean of the $\xbu^{(i)}$ component of $\mathcal{D}$ of size $d$ is given by 
$\mu = (\mu_1, \cdots, \mu_{n})$ where $\mu_j = \frac{1}{d}\sum_{i = 1}^d x_j^{(i)}$, and the standard deviation is given by $\sigma = (\sigma_1, \cdots, \sigma_{n})$ where $\sigma_j = \sqrt{\frac{1}{d-1}(\sum_{i = 1}^d {x_j^{(i)}}^2 - (2d-1)\mu_j^2}), j \in [n]$. 
Secure computation of basis statistics such as mean and standard deviation has been widely investigated using secret sharing, (labeled) homomorphic encryption, and aggregation protocols, e.g., \cite{LabeledHE,SecureStat,rmind-ieee-dsc-2018,Prio-Usenix2017}. 

As we have considered a scenario of users dropping out, we perform the scaling process using a dropout-enabled aggregation protocol ($\pi_{\textsc{DeA}}$), coordinated by the server. 
Instead of running $\pi_{\textsc{DeA}}$ twice (once for computing $\mu$ and another for $\sigma$), our idea is to run $\pi_{\textsc{DeA}}$ only once on $2n$ dimensional inputs $X^{(i)}$, which 
we construct from $n$-dimensional inputs $\xbu^{(i)}$ as follows. 
Each user $P_i$ locally computes a single input $X^{(i)}$ on its dataset $\DS_i$ as  
$$X^{(i)} = \Big(\sum_{\xbu \in \DS_i} x_1, \cdots, \sum_{\xbu \in \DS_i} x_{n}, \sum_{\xbu \in \DS_i}x_1^2, \cdots, \sum_{\xbu \in \DS_i}x_{n}^2\Big)$$ 
where $\xbu = (x_1, \cdots, x_n)$. 
This can be viewed as processing multiple-data using a single execution of the protocol. 
On inputs $\{X^{(i)}, d_i\}$ from users in $\us$ and a security parameter $t$, the server receives $(\us_a, X) \leftarrow \pi_{\textsc{DeA}} ( \us, \{X^{(i)}, d_i\}, |\us|, t)$ with $|\us| \geq t$ where $X = (X_1, \cdots, X_n, X_{n+1}, \cdots, X_{2n}) = \sum_{ u \in \us_a} X^{(u)}$ if $|\us_a| \geq t$, and $\us_a$ are the alive users who participated in the scaling process. If $|\us_a| < t$, abort the protocol. 
The server computes the mean $\mubu = (\mu_1, \cdots, \mu_n)$ and standard deviation $\sigmabu = (\sigma_1, \cdots, \sigma_n)$ as $\mu_j = \frac{X_j}{d_a},$ and $\sigma_j = \sqrt{\frac{1}{d_a-1}( X_{n+j} - (2d_a-1)\mu_j^2}), j \in [n]$ and $d_a = \sum_{u \in \us_a}d_u$. The server then sends $\mubu$ and $\sigmabu$ to all users in $\us_a$ through an authenticated channel. Using $\mubu$ and $\sigmabu$, the users scale their local datasets. 

\subsection{A High-level Overview of Our Regression Training Protocols}\label{sec:ProtocolFlow}

\par\noindent{\bf Basis idea.} To train a regression model, the users and the server execute  a multiparty global gradient computation protocol where the server gives the model  and the users provide their local datasets as inputs, and the server obtains an updated model as an output. The novelty of \regsys's multiparty gradient computation is that the global gradient computation is performed by executing multiple two-party shared local gradient protocols in parallel, followed by executing a  global gradient share-reconstruction protocol realized using an aggregation protocol (see Figure~\ref{fig:OverallProtocolFlow}).  
As shown in Section~\ref{sec:SysModel},  sending the local gradient directly to the server may leak users' datasets $\DS_i$. 
Our idea is to prevent such leakage by additively secret-sharing the local gradient $ \omegabu_i$ of a user $P_i$  between the server and the user such that $\sbu_i - \rbu_i = \omegabu_i$, where the server holds the share $\sbu_i$, and the user $P_i$ holds the share $\rbu_i$. To prevent leaking the model to the users, the server encrypts the model $\thetabu$ using an additive homomorphic encryption scheme. 
In the first phase, each user computes an encrypted share $\E(\sbu_i)$ of $\omegabu_i$ from $\E(\thetabu)$ and $\DS_i$. This is computed by a protocol called, the {\it shared local gradient (SLG)} computation protocol.
Instead of sending the individual users' shares to the server, the users send their shares in aggregate. For this, the server conducts a share reconstruction process to obtain an aggregate-sum of the shares.  
This phase is called the {\it share reconstruction phase}. 
After computing $\omegabu$ on alive users' ($\us_a$) datasets from $\{\sbu_i\}$ and $\{\rbu_i\}$ as $\omegabu = \sum_{u \in \us_a} \sbu_u - \sum_{u \in \us_a} \rbu_u$, the server updates the model as $\thetabu \leftarrow \thetabu - \frac{\eta}{d_a} \omegabu$ where $d_a = \sum_{u \in \us_a}d_u$. 
Figure~\ref{fig:OverallProtocolFlow} depicts an overview of secure multiparty global gradient computation on the joint dataset $\DS$. 
\begin{figure}[h]
\centering
\vspace{-0.25cm}
\resizebox{8.5cm}{!}{
 \begin{tikzpicture}
  \draw (0,-0.5) rectangle (1.6,0);
  \draw (0,1) rectangle (1.6,1.5);
  \draw (0,2) rectangle (1.6,2.5);
  
  \node at (0.8, -0.25) {\small SLG};
  \node at (0.8, 1.25) {\small SLG};
  \node at (0.8, 2.25) {\small SLG};
  
  \node at (0.7, 0.15) {\small $\omegabu_m = \sbu_m - \rbu_m$};
  \node at (0.7, 1.7) {\small $\omegabu_2 = \sbu_2 - \rbu_2$};
  \node at (0.7, 2.7) {\small $\omegabu_1 = \sbu_1 - \rbu_1$};
  
  \node at (0.8, 0.6) {$\vdots$};
  
  \draw[->] (-0.75,-0.4) -- (0, -0.4); \node at (-1.1,-0.4){\small $\DS_m$};
  \draw[->] (-0.75,-0.15) -- (0, -0.15); \node at (-1.1,-0.15){\small $\E(\thetabu)$};
  
  \draw[->] (-0.75,1.1) -- (0, 1.1); \node at (-1.1,1.1){\small $\DS_2$};
  \draw[->] (-0.75,1.35) -- (0,1.35); \node at (-1.1,1.35){\small $\E(\thetabu)$};
  
  \draw[->] (-0.75,2.1) -- (0, 2.1); \node at (-1.1, 2.1){\small $\DS_1$}; 
  \draw[->] (-0.75,2.35) -- (0,2.35); \node at (-1.1,2.35){\small $\E(\thetabu)$};
  
  \draw[->] (1.6,-0.4) -- (2.35, -0.4); \node at (1.975,-0.65){\small $\rbu_m$};
  \draw[->] (1.6,-0.15) -- (2.35, -0.15); \node at (2,0.1){\small $\E(\sbu_m)$};
  
  \draw[->] (1.6,1.1) -- (2.35, 1.1); \node at (1.975, 0.85){\small $\rbu_2$};
  \draw[->] (1.6,1.35) -- (2.35, 1.35); \node at (1.975, 1.6){\small $\E(\sbu_2)$};
  
  \draw[->] (1.6,2.1) -- (2.35, 2.1); \node at (1.975, 1.9){\small $\rbu_1$};
  \draw[->] (1.6,2.35) -- (2.35, 2.35); \node at (1.975, 2.6){\small $\E(\sbu_1)$};

  \draw (3.5,-0.5) rectangle (5.25,0.5);
  \draw (3.5,1.5) rectangle (5.25,2.5);
  
  \node[text width=1.5cm] at (4.45, 0) {\small Share reconstruction phase};
  \node[text width=1.5cm] at (4.45, 2) {\small Decryption + aggregation};
  
  \draw[->] (4.45, 0.6) -- (4.45, 1.4);
   \node at (4.7, 1) {\small $\us_a$};
  
  \draw (5.7,0.7) rectangle (6.5,1.35); 
  \draw[->] (6.5,1) -- (7, 1);
  
  \draw[->,dashed] (2.35, -0.15) -- (2.8, -0.15) -- (2.8, 1.6) -- (3.5, 1.6);
  \draw[->,dashed] (2.35, 1.35) -- (2.7, 1.35) -- (2.7, 2.2) -- (3.5, 2.2);
   \draw[->,dashed] (2.35, 2.35) -- (3.5, 2.35);
   \node at (3.4, 2) {$\vdots$};
   
  \draw[->] (2.35, -0.4) -- (3.5, -0.4);
  \draw[->] (2.35, 1.1) -- (3, 1.1) -- (3, 0.25) -- (3.5, 0.25);
  \draw[->] (2.35, 2.1) -- (3.2, 2.1) -- (3.2, 0.4) -- (3.5, 0.4);
  
  \node at (3.4, 0) {$\vdots$};
  
  \draw[->] (5.25, 0) -- (5.5, 0) -- (5.5, 0.85) -- (5.7, 0.85);
  \draw[->] (5.25, 2) -- (5.5, 2) -- (5.5, 1.1) -- (5.7, 1.1);
  
  \node at (6.3,1.6){\small $\omegabu = \sum_{i} \omegabu_i$};
   \node at (6.4,0.4){\small $\thetabu \leftarrow \thetabu - \frac{\eta}{d_a} \omegabu$};
   \node at (6.75,1.2){\small $\thetabu$};
    \node at (6,1.1){\small $\omegabu = $};
    \node at (6.2,0.9){\small $\sbu - \rbu$};
  
  \node at (5.9,2.2){\small $\sbu = \sum_{i} \sbu_i$};
  
  \node at (5.9,-0.2){\small $\rbu = \sum_{i} \rbu_i$};
  
  
  \draw[dashed] (-1.5,-1.2) -- (2.6,-1.2) -- (2.6,3) -- (-1.5,3) -- (-1.5,-1.2);
  \draw[dashed] (3.3,-1.2) -- (7.3,-1.2) --  (7.3,3) -- (3.3,3) -- (3.3,-1.2);
  
  \node at (0.55,-0.95){\small {\bf Computing local gradient shares}};
  \node at (5.45,-0.7){\small {\bf Computing global gradient}};
  \node at (5.45,-0.95){\small {\bf from local shares}};
  
 \end{tikzpicture}}
\vspace{-0.7cm}
\caption{\regsys's protocol flow of privacy-preserving global gradient computation for regression models.}
\label{fig:OverallProtocolFlow}
\vspace{-0.35cm}
\end{figure}
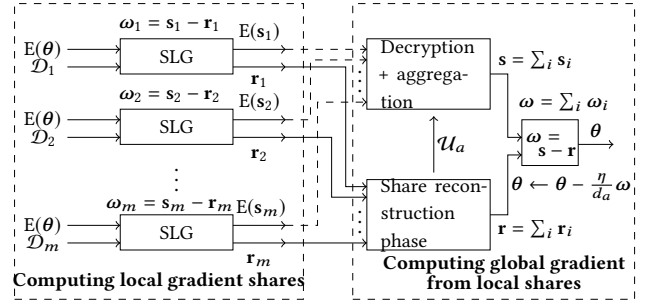
\par\noindent{\bf Advantage.} The advantages of the above flow of the protocol are as follows. 
At the $i$-th round of the training process, the users participating in the global gradient computation process do not need to know the other participating users ahead of the time.
The information about the other participating users will be enough to know only in the share reconstruction phase. This provides a great flexibility in the global gradient computation phase for 
handling the dropout scenario in mobile applications. Moreover, the server can choose the participating users randomly based on their aliveness in the global gradient computation phase. 
Note that the server can choose the participating users and let all users know ahead of time, but this will be computationally expensive for both users and the server for handling the dropout scenario. 
Notice that any two-party protocol can be used to realize the SLG computation. However, we use an additive-HE based technique for regression models to make it communication efficient because of our application scenario. 

\subsection{Secure Shared Local Gradient Computation Protocols}\label{sec:PPGD}
A secure shared local gradient computation involves a user and the server. 
We present two protocols for computing the shared local gradient computation: one protocol is for the linear regression (see Figure~\ref{fig:LinRegGDShareProtocol}), and another is for the logistic regression (see Figure~\ref{fig:LogRegGDShareProtocol}).

\par\noindent{\bf Computing the SLG for linear regression.}
To minimize the number of rounds, the server encrypts the model using an additive HE scheme and sends it to the user $P_i$ so that it can compute two shares of the local gradient $\omegabu_i$ on its dataset $\DS_i$ in one round of communication. Note that only the server holds the private key of the HE scheme.
An inner product computation of two vectors is a common task in training and evaluating a linear or logistic regression model. 
Secure inner product computation has been studied in many settings, e.g., \cite{DuAtallah,SecureScalarProduct}. 
For the sake of completeness, we provide a homomorphic inner product computation protocol from \cite{SecureScalarProduct} in Algorithm~\ref{algo:IPcomputation}. 
Figure~\ref{fig:LinRegGDShareProtocol} presents the shared local gradient computation protocol for a linear/ridge regression model. 
%
\begin{algorithm}[h]
\caption{Encrypted Inner Product}\label{algo:IPcomputation}
\begin{algorithmic}[1]
\State \textsc{Input:} $\xbu$ and $\E(\thetabu)$
\State \textsc{Output:} $\E(\thetabu\cdot \xbu)$
\Procedure{IP\_User}{}
\State $\E(\thetabu\cdot\xbu) = \prod_{j= 1}^n \E(\theta_j)^{x_j}\cdot \E(\theta_0)$
\State \Return $\E(\thetabu\cdot\xbu)$
\EndProcedure
\end{algorithmic}
\end{algorithm}
\begin{algorithm}[h]
    \centering
    \caption{Computing Shares of Local Gradient for Linear Regression}\label{algo:EncAddShareLinearReg}
    \footnotesize
   \begin{algorithmic}[1]
\State \textsc{Input:} $\mathcal{D} = \{(\xbu^{(i)}, y^{(i)})\}_{i = 1}^p$ and $\E(\thetabu)$
\State \textsc{Output:} ($\E(\sbu)$, $\rbu$) s.t. $\sbu - \rbu = \sum_{j = 1}^p \gdbu^{(j)}$
\Procedure{Comp\_LinSLG\_Share}{}
\State Randomly generate $\rbu = ( r_0, r_1, \cdots, r_n)$
\State Set $\tbu \leftarrow (0, 0, \cdots, 0)$
\For{$i = 1$ to $p$} 
\State $\E(e^{(i)}) = \textsc{IP\_User}(\xbu^{(i)}, \E(\thetabu)) \cdot \E(-y^{(i)})$
\State $\E(t_0) \leftarrow \E(t_0)\cdot \E(e^{(i)}) = \E(t_0 + e^{(i)})$
\For{$j = 1$ to $n$}
\State $\E(t_j) \leftarrow \E(t_j)\cdot\E(e^{(i)})^{ x_j^{(i)} } = \E( t_j + e^{(i)}x_j^{(i)} ) $
\EndFor
\EndFor
\State Compute $\E(\sbu) = \E(\tbu)\cdot\E(\rbu) = \E(\tbu + \rbu)$
\State \Return ($\E(\sbu)$, $\rbu$)
\EndProcedure
\end{algorithmic}
\end{algorithm}

\begin{figure}[h]
\begin{center}
\fbox{
\procedure[syntaxhighlight=auto,space=auto,width=7.5cm, mode=text]{\textbf{Secure SLG Protocol: Linear regression} ($\pi_{\textsc{LinSLG}}$ )}{
{\bf Server: } Regression model $\thetabu$, {\bf User ($P_i$): } $\DS_i$ \\
{\bf Output:} Server receives $\E(\sbu_i)$, $P_i$ receives $\rbu_i$.\\
[][\hline] \\
\vspace{-0.75cm}
1. Server encrypts the model $\thetabu$ and sends $\E(\thetabu)$ to the user $P_i$\\ 
2. User  $P_i$ runs Algorithm~\ref{algo:EncAddShareLinearReg} on inputs $\E(\thetabu)$ and $\DS_i$, and obtains $(\E(\sbu^{i}), \rbu^{i})$  
such that $\sbu_i - \rbu_i = \omegabu_i = \texttt{GD}(\thetabu, \DS_i, d_i)$. \\
3. $P_i$ stores $\rbu_i$ and sends $\E(\sbu_i)$ to the server.
}}
\vspace{-0.25cm}
\caption{Secure shared local gradient computation protocol for a linear regression model.} 
\label{fig:LinRegGDShareProtocol}
\end{center}
\vspace{-0.35cm}
\end{figure}

\par\noindent{\bf Computing the SLG for logistic regression.}
The computation of the local gradient involves an evaluation of the sigmoid function on $\thetabu\cdot\xbu^{(j)}$.
As shown for linear regression, the server and the user can compute $\E(\thetabu\cdot\xbu^{(j)})$ in one round of communication.  
As we used an additive HE scheme, the user and the server need one more round of communication to compute $\E(\sigma_3( \thetabu\cdot\xbu^{(j)})), \xbu^{(j)}\in \DS_i$ from $\E(\thetabu\cdot\xbu^{(j)})$ 
where the sigmoid function is approximated by a cubic polynomial, which provides a good tradeoff between the accuracy and the efficiency.
Let $\sigma_3(x) = q_0 + q_1 x + q_2 x^2 + q_3 x^3$ be a cubic approximation of $\sigma(x)$ where the coefficients $q_i$ are public. 
To protect an input $\xbu$ against the server, the user masks $y = \thetabu\cdot\xbu$ as $z = y + r$ by choosing a random value $r$, and then sends it to the server. The computation of $\sigma_3(y)$ can be expressed  as 
$$\sigma_3(y) = \sigma_3(z)  - \sigma_3(r) - (q_0 + 3q_3r^3) - 3q_3rz^2 - (2q_2r  - 6q_3r^2)y.$$ 
To reduce computational cost for the users, the server computes $\E(z^2)$ and $\E(\sigma_3(z))$ from $\E(z)$ and sends ($\E(z^2), \E(\sigma_3(z))$) to the user. 
Given $\E(y), \E(z^2), \E(\sigma_3(z))$ and $r$,  the user can compute $\E(\sigma_3(y))$ using the homomorphic property of the encryption scheme as
$$\E(z^2)^{- 3q_3r}\E(y)^{- (2q_2r - 6q_3r^2)} \E(-(\sigma_3(r)+(q_0 + 3q_3r^3))) \E(\sigma_3(z)).$$ 
Once the user has $\{\E(\sigma_3(\thetabu\cdot\xbu^{(j)}))\}$, it can compute the local gradient $\omegabu_i$ on $\DS_i$, using the steps described in Algorithm~\ref{algo:EncAddShareLogisticReg}. 
Figure~\ref{fig:LogRegGDShareProtocol} summarizes the shared local gradient computation protocol for a logistic regression model. 
\begin{algorithm}[H]
    \centering
    \caption{Computing Shares of Local Gradient for Logistic Regression}\label{algo:EncAddShareLogisticReg}
    \footnotesize
   \begin{algorithmic}[1]
\State \textsc{Input:} $\mathcal{D} = \{(\xbu^{(i)}, y_i)\}_{i = 1}^p$ and $\{\E(\sigma_3(\thetabu\cdot\xbu^{(i)}))\}$
\State \textsc{Output:} ($\E(\sbu)$, $\rbu$) s.t. $\sbu - \rbu = \sum_{j = 1}^p \gdbu^{(j)}$
\Procedure{Comp\_LogSLG\_Share}{}
\State Randomly generate $\rbu = ( r_0, r_1, \cdots, r_n)$
\State Set $\tbu \leftarrow (0, 0, \cdots, 0)$
\For{$i = 1$ to $p$} 
\State $\E(e^{(i)}) = \E(\sigma_3(\thetabu\cdot\xbu^{(i)}))\cdot \E(-y^{(i)})$
\State $\E(t_0) \leftarrow \E(t_0)\cdot \E(e^{(i)}) = \E(t_0 + e^{(i)})$
\For{$j = 1$ to $n$}
\State $\E(t_j) \leftarrow \E(t_j)\cdot\E(e^{(i)})^{ x_j^{(i)} } = \E( t_j + e^{(i)}x_j^{(i)} ) $
\EndFor
\EndFor
\State Compute $\E(\sbu) = \E(\tbu)\cdot\E(\rbu) = \E(\tbu + \rbu)$
\State \Return ($\E(\sbu)$, $\rbu$)
\EndProcedure
\end{algorithmic}
\end{algorithm}

\begin{figure}[h]
\begin{center}
\fbox{
\procedure[syntaxhighlight=auto,space=auto,width=7.5cm, mode=text]{\textbf{Secure SLG Protocol: Logistic regression} ($\pi_{\textsc{LogSLG}}$ )}{
{\bf Server: } Regression model $\thetabu$, {\bf User ($P_i$): } $\DS_i$ \\
{\bf Output:} Server receives $\E(\sbu_i)$, $P_i$ receives $\rbu_i$.\\
[][\hline] \\
1. Server encrypts the model $\thetabu$ and sends $\E(\thetabu)$ to the user $P_i$\\ 
2. $P_i$ randomly generates $\{c_j\}_{j = 1}^{d_i}$.\\ 
3. $P_i$ computes: for  j = 1 to $d_i$ do \\
Compute $\E(z_j) \leftarrow \textsc{IP\_User}( \xbu^{(j)}, \E(\thetabu))\cdot \E(c_j), \text{ where } z_j = \thetabu\cdot\xbu^{(j)}+c_j$\\
endfor\\
4. $P_i$ sends $\{\E(z_j)\}$ to the server \\
5. Server decrypts $\{\E(z_j)\}_{j = 1}^{d_i}\}$ and computes $\{\E(z_j^2), \E(h_q(z_j))\}_{j =1}^{d_i}$ and sends to $P_i$ \\
6. $P_i$ computes: 
 for  j = 1 to $d_i$ do \\
$t = \E(-(h_q(c_j)+(q_0 + 3q_3c_j^3))) \cdot \E(\thetabu\cdot\xbu^{(j)})^{- (2q_2c_j - 6q_3c_j^2)}$\\
$\E(\sigma_3(\thetabu\cdot\xbu^{(j)})) = \E(\sigma_3(z_j))  \E(z_j^2)^{- 3q_3c_j}  \cdot t$\\
endfor\\
7. $P_i$ runs Algorithm~\ref{algo:EncAddShareLogisticReg} on inputs $\{ \E(\sigma_3(\thetabu\cdot\xbu^{(j)})) \}$ and $\DS_i$, and obtains $(\E(\sbu_{i}), \rbu_{i}) $
such that $\sbu_i - \rbu_i = \omegabu_i = \texttt{GD}(\thetabu,  \DS_i, d_i)$. \\
8. $P_i$ stores $\rbu_i$ and sends $\E(\sbu_i)$ to the server.
}}
\end{center}
\vspace{-0.35cm}
\caption{Secure shared local gradient computation protocol for a logistic regression model.} 
\label{fig:LogRegGDShareProtocol}
\vspace{-0.35cm}
\end{figure}
\par\noindent{\bf Complexity.} 
We measure the computational complexity of $\pi_{\textsc{LinSLG}}$ and $\pi_{\textsc{LogSLG}}$ for computing the local gradient in terms of the number of ciphertext multiplications, the number of encryptions, and the number of homomorphic constant multiplications. Table~\ref{tab:NumberHEOperation} summarizes their exact numbers. 
Asymptotically, the overall computational complexity for both $\pi_{\textsc{LinSLG}}$ and $\pi_{\textsc{LogSLG}}$ is $O(n d_i)$. 

Let $\lambda$ be the bit length of a ciphertext. Then, the communication cost involved in computing $\omegabu_i$ for the linear regression is $2(n+1)\lambda$, which is for receiving the encrypted model and sending an encrypted share of $\omegabu_i$ to the server. For the logistic regression, the communication cost  is $(2(n+1)+3d_i)\lambda$ where $d_i = |\DS_i|$ is the size of the dataset, which is due to receiving the encrypted model, sending an encrypted share of $\omegabu_i$ and information exchange for homomorphically computing the approximated sigmoid function.  
\begin{table}[h]
\centering
\caption{Number of unit operations for each user in the gradient computation protocols}
\vspace{-0.35cm}
\label{tab:NumberHEOperation}
\resizebox{8.5cm}{!}{
\begin{tabular}{| l | c | c| c|}\hline
{\bf Protocol} & \#{\bf CT MUL} & \#{\bf Const MUL} & \#{\bf Enc} \\ \hline
Linear regression ($\pi_{\textsc{LinSLG}}$) & $2(n+1) d_i - (n+1)$ & $2n d_i$ & $d_i + (n+1)$ \\ \hline
Logistic regression ($\pi_{\textsc{LogSLG}}$) & $(2n+5)d_i - (n+1)$ & $2(n+2)d_i $ & $3d_i + (n+1)$ \\ \hline
\end{tabular}
}
\end{table}
\par\noindent{\bf Security.} 
We prove the security of two SLG computation protocols in the simulation paradigm. 
Theorems~\ref{thm:LinSLG} and~\ref{thm:LogSLG} summarize the security of the  SLG computation protocols for linear and logistic regressions.  
We show that an adversary $\A$ controlling a user $P_i$ (resp. the server) does not learn anything about $\thetabu$ (resp. $\omegabu_i$) during the execution of $\pi_{\textsc{LinSLG}}$ and similarly for 
$\pi_{\textsc{LogSLG}}$.
Due to the space limit, we present the security proofs in Appendix~\ref{sec:SLGSecProof}.
\begin{theorem}\label{thm:LinSLG}
Assume that the additive homomorphic encryption scheme $\E()$ is semantically secure. The protocol $\pi_{\textsc{\emph{LinSLG}}}$ between $P_i$ and $S$ securely computes two shares of the local gradient on the dataset $\DS_i$ in the presence of semi-honest adversaries. 
\end{theorem}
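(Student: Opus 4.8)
The plan is to prove security in the standard simulation paradigm for semi-honest two-party computation, treating $\pi_{\textsc{LinSLG}}$ as realizing the functionality that, on server input $\thetabu$ and user input $\DS_i$, samples a uniform mask $\rbu_i$, sets $\sbu_i = \omegabu_i + \rbu_i$ with $\omegabu_i = \gd(\thetabu, \DS_i, d_i)$, and delivers $\E(\sbu_i)$ (equivalently $\sbu_i$) to the server and $\rbu_i$ to $P_i$. Since only two parties participate and the threat model allows either one to be corrupt, I would construct two simulators $\siml_S$ and $\siml_P$ and show that each produces a view computationally indistinguishable from the corresponding real view, given only the corrupted party's input and prescribed output.

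First I would handle a corrupt server. The server's real view consists of its input $\thetabu$, the key pair it generated, the ciphertext $\E(\thetabu)$ it produced itself, and the single incoming message $\E(\sbu_i)$, which it can decrypt to $\sbu_i$. The key observation is that $\sbu_i = \omegabu_i + \rbu_i$ where $\rbu_i$ is drawn uniformly at random over $\Z_{2^k}^{\,n+1}$ by the honest user; hence $\sbu_i$ is itself uniform and statistically independent of $\omegabu_i$ and of $\DS_i$. The simulator $\siml_S$, given $\thetabu$ and the output share $\sbu_i$, runs the key generation, forms $\E(\thetabu)$ honestly, and outputs a fresh encryption of $\sbu_i$ in place of the incoming ciphertext. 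Because the final homomorphic step $\E(\sbu)=\E(\tbu)\cdot\E(\rbu)$ multiplies by the fresh encryption $\E(\rbu)$, the randomness of the real $\E(\sbu_i)$ is re-randomized, so its distribution matches a fresh encryption of the same plaintext; the two views are therefore identically (or statistically) distributed, and the server learns nothing about $\omegabu_i$ beyond its own share.

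Next I would handle a corrupt user $P_i$. Its real view consists of $\DS_i$, its internal randomness (including $\rbu_i$), and the one incoming message $\E(\thetabu)$; every subsequent value it computes itself. The simulator $\siml_P$, given $\DS_i$ and output $\rbu_i$ but \emph{not} $\thetabu$, replaces the incoming ciphertext by an encryption of a fixed dummy vector, e.g.\ $\E(\mathbf{0})$, and otherwise runs $P_i$'s honest code. Indistinguishability of the simulated and real views then reduces to the semantic security of $\E(\cdot)$: any distinguisher between the two views yields an adversary against IND-CPA. Since the model $\thetabu=(\theta_0,\dots,\theta_n)$ is encrypted componentwise, I expect this reduction to be the main obstacle, as it requires a standard hybrid argument over the $n+1$ ciphertexts, swapping $\E(\theta_j)$ for $\E(0)$ one coordinate at a time, so that a distinguishing advantage on the vector view is translated into a single-message semantic-security advantage with a loss factor of at most $n+1$.

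Finally, I would observe that the only message crossing the channel in each direction is a single ciphertext, so there are no further transcript elements to simulate, and that the outputs produced in the real protocol ($\sbu_i$ for $S$ and $\rbu_i$ for $P_i$) coincide with the shares of the ideal functionality by construction of Algorithm~\ref{algo:EncAddShareLinearReg}. Combining the two cases establishes security against both the users-only and the server-only semi-honest adversaries; the analogous statement for $\pi_{\textsc{LogSLG}}$ would follow the same template, with the extra sigmoid-approximation round handled by noting that the masked value $z=\thetabu\cdot\xbu+c_j$ hides $\thetabu\cdot\xbu$ from the server through the uniform mask $c_j$.
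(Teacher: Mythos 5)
Your proposal is correct and follows essentially the same route as the paper's proof: two simulators, with the server's view handled by the uniformity of the additive mask $\rbu_i$ (making $\sbu_i$ independent of $\DS_i$) and the user's view handled by replacing $\E(\thetabu)$ with $\E(\mathbf{0})$ under semantic security. Your additional remarks on re-randomizing the homomorphically computed ciphertext and on the $(n+1)$-step hybrid are refinements the paper leaves implicit, but they do not change the argument.
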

\begin{theorem}\label{thm:LogSLG}
Assume that the additive homomorphic encryption scheme $\E()$ is semantically secure.
The protocol $\pi_{\textsc{\emph{LogSLG}}}$ between $P_i$ and $S$ securely computes two shares of the local gradient $\omegabu_i$ on the dataset $\DS_i$ in the presence of semi-honest adversaries. 
\end{theorem}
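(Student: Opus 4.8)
The plan is to follow the standard simulation paradigm for two-party protocols against semi-honest adversaries, constructing a probabilistic polynomial-time simulator for each of the two corruption cases (user-only and server-only) and arguing that the simulated view is indistinguishable from the real view. Since $\pi_{\textsc{LogSLG}}$ extends $\pi_{\textsc{LinSLG}}$ with the two extra rounds (steps 4--6) that homomorphically evaluate the cubic sigmoid approximation $\sigma_3$, I would reuse the argument behind Theorem~\ref{thm:LinSLG} for the common part and concentrate the new work on these additional messages.

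For the case where $\A$ corrupts the user $P_i$, the view consists of $P_i$'s input $\DS_i$, its random coins $\{c_j\}$ and $\rbu_i$, together with the messages received from the server, namely $\E(\thetabu)$ in step 1 and $\{\E(z_j^2), \E(h_q(z_j))\}$ in step 5. Every such message is a ciphertext under the additive HE scheme whose secret key is held only by the server. I would build the simulator $\siml_{P_i}$ so that, given $\DS_i$ and the output $\rbu_i$, it replaces each received ciphertext with $\E(0)$ (an encryption of an arbitrary fixed plaintext). Indistinguishability then follows from the semantic security of $\E$ via a standard hybrid argument that swaps the honestly generated ciphertexts for encryptions of zero one at a time; any distinguisher yields an adversary against the IND-CPA game of the HE scheme. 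This shows $P_i$ learns nothing about $\thetabu$ beyond what is implied by its prescribed output.

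For the case where $\A$ corrupts the server $S$, the view consists of $\thetabu$, the server's random coins and secret key, the values $\{z_j\}$ obtained by decryption in step 5, and the ciphertext $\E(\sbu_i)$ received in step 8 (the server's output, which it may also decrypt). The key observation is that each $z_j = \thetabu\cdot\xbu^{(j)} + c_j$ is additively masked by a fresh $c_j$ that is uniform over $\Z_{2^k}$, so $z_j$ is uniformly distributed and statistically independent of $\xbu^{(j)}$; likewise $\sbu_i = \tbu + \rbu_i$ is uniform because $\rbu_i$ is a fresh uniform share, and since $\{c_j\}$ and $\rbu_i$ are sampled independently the pair $(\{z_j\}, \sbu_i)$ is jointly uniform. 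The simulator $\siml_S$, given $\thetabu$ and the output $\E(\sbu_i)$, therefore samples each $z_j$ uniformly, encrypts it to supply the step-4 message $\{\E(z_j)\}$, and outputs the remaining honestly computed values; this reproduces the real view up to statistical distance $0$, since the encryption randomness is simulated exactly as in the real protocol. Hence $S$ learns nothing about $\DS_i$ or $\omegabu_i$ beyond its share $\sbu_i$.

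The main obstacle I anticipate is the server-corruption case, specifically the extra leakage channel created by revealing the masked inner products $z_j$ in the clear, which is absent in the linear protocol of Theorem~\ref{thm:LinSLG}. The argument hinges on $c_j$ being uniform over the entire message space $\Z_{2^k}$ so that the one-time additive mask perfectly hides $\thetabu\cdot\xbu^{(j)}$; I would state this modeling assumption explicitly and verify that the correctness identity for $\sigma_3$ still holds over $\Z_{2^k}$ for such $c_j$, so that hiding is not bought at the cost of correctness. The remaining subtlety is confirming the mutual independence of the masks $\{c_j\}$ and $\rbu_i$, which guarantees that the joint distribution of the decrypted $z_j$'s and the reconstructed share $\sbu_i$ carries no residual information about the user's data.
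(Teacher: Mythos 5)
Your proposal is correct and follows essentially the same route as the paper's proof in Appendix~\ref{sec:SLGSecProof}: a simulator for each corruption case, with semantic security of $\E$ handling the user-corrupted view (all received messages replaced by encryptions of zero) and the uniformity of the one-time additive masks $\{c_j\}$ and $\rbu_i$ handling the server-corrupted view. Your server-side simulator, which samples each $z_j$ uniformly, is if anything a slightly cleaner formulation than the paper's (which writes $z_j' = \thetabu\cdot\xbu^{(j)} + c_j'$ even though the simulator does not hold $\xbu^{(j)}$), but the underlying argument is identical.
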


\subsection{Privacy-preserving Regression Model Training Protocol}\label{sec:PPTraningProtocol}
%
The training protocol is executed between the server, and a set of $m$ users in a synchronous network and is divided into three phases: 1) a pairwise key establishment phase, 2) a dataset scaling phase and 3) computing the regression model by iteratively computing the global gradient over a subset of users' data. 
A pair of pairwise keys is established using a Diffie-Hellman (DH) key agreement protocol to realize an authenticated channel between a pair of users and to use in the aggregation protocol. 
Before starting the training process, the users need to learn the mean and standard deviation on the entire dataset $\DS$, which can be easily computed by running the aggregation protocol as shown in Section~\ref{sec:scaling}. 

The process of the global gradient computation \footnote{In the training phase, the batch size in the gradient computation varies in the range of $(t+\rho-1)\ell$ and $2t\ell$, and $|\DS_i| = \ell$.} consists of four main steps: 
\begin{itemize}
\item the server randomly chooses a set of $M$ users and broadcasts the current encrypted model to these users,
\item the server and each user privately compute two shares of the local gradient on the chosen user's dataset in parallel, 
\item the server and alive users compute a single  (aggregated) share from the alive users' shares of the local gradients, 
\item the server computes the second share of the global gradient from its local shares and  then recovers the  global gradient for updating the model.
\end{itemize}
The parameter $M$ is chosen based on the security parameter of the aggregation protocol and the $\epsilon$-privacy of the aggregate-sum (see Appendix~\ref{sec:AggPrivGame}). 
For simplicity, we  assume that each user has an equal number of data points, i.e.,  $|\DS_i| = d_i = \ell, \forall i$. 
For a coalition of $c$ users and/or the server in the $\epsilon$-private aggregation scheme, the number of dropouts, denoted by $\delta$, must satisfy the following constraint: $\ell ( M - \delta - c) \geq \epsilon$, which implies $\delta \leq (M- \rho - c)$ where $\rho = \lceil \frac{\epsilon}{\ell} \rceil$. 
To resist a coalition of up to $t$ parties including users and the server, $M$ must be at least $2t$ where $t = \lceil \frac{m}{3} \rceil$. 
For a coalition of size up to $t$ and $M = 2t$, $ \delta \leq (t - \rho)$. 
When $\ell = 1$, for such a coalition of size up to $t$ and $\delta = t$ dropouts, the number of users must participate in the training process is at least $2t + \epsilon$. The threshold value of $\pi_{\textsc{DeA}}$ in the scaling phase must be at least $2t$.
Figure~\ref{fig:LinRegTrainingProtocol} summarizes the complete protocol for training a linear or logistic regression model, given the parameters $\delta, t, \epsilon$ and $\ell$, while handling the dropout scenario.  
\begin{figure*}[!tbhp]
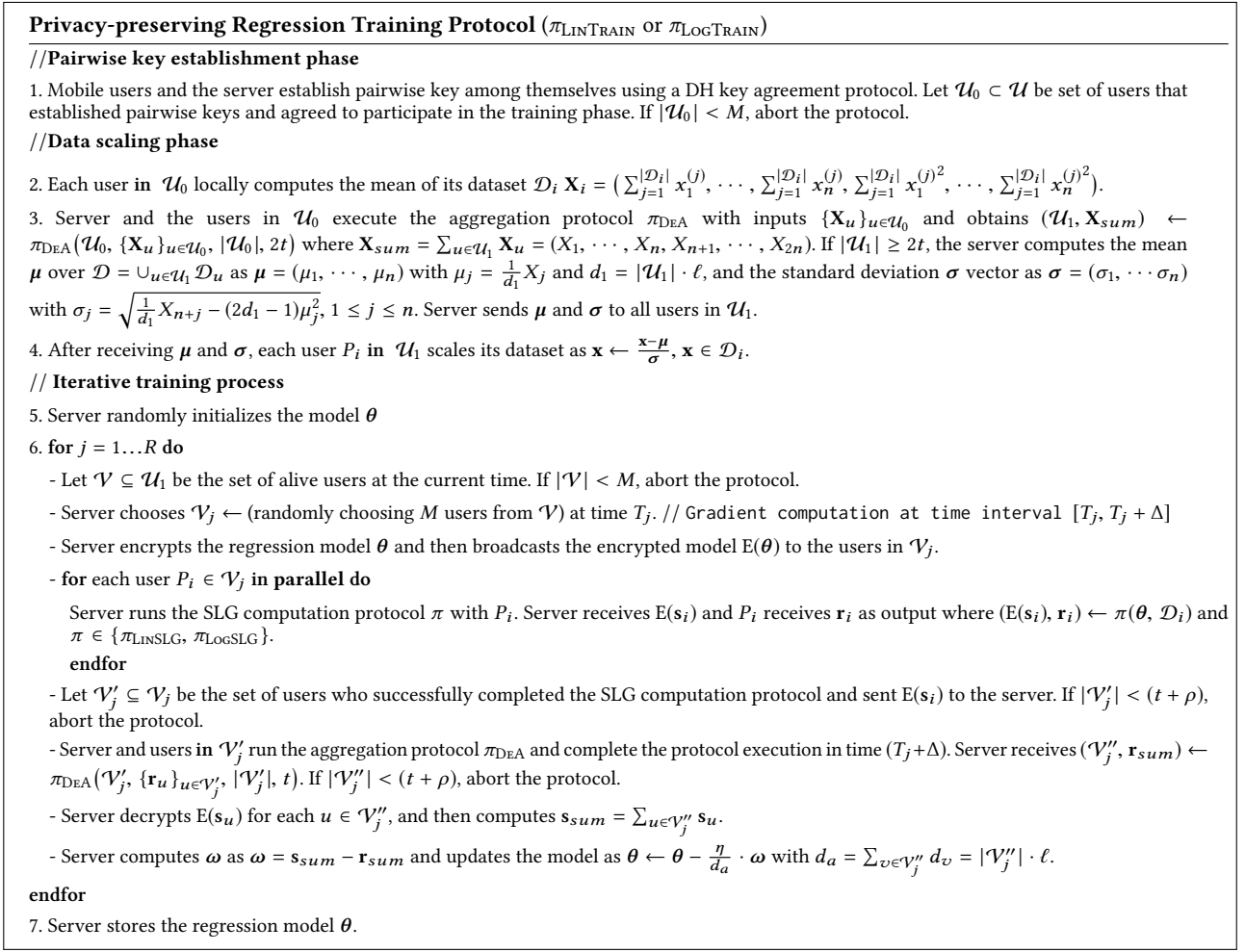

\vspace{-0.4cm}
\begin{center}
\fbox{
\procedure[syntaxhighlight=auto,space=auto,width=16cm, mode=text]{\textbf{Privacy-preserving Regression Training Protocol} ($\pi_{\textsc{LinTrain}}$ or $\pi_{\textsc{LogTrain}}$)}{
 $//${\bf Pairwise key establishment phase}\\
1. Mobile users and the server establish pairwise key among themselves using a DH key agreement protocol. Let $\us_0 \subset \us$ be set of users that established pairwise keys and agreed to participate \textrm{in} the training phase. If $|\us_0| < M$, abort the protocol.\\
 $//${\bf Data scaling phase}\\
2. Each user {\rm in } $\us_0$ locally computes the mean of its dataset $\DS_i$ $\Xbu_i = \big( \sum_{j = 1}^{|\DS_i|} x_{1}^{(j)}, \cdots,  \sum_{j = 1}^{|\DS_i|} x_{n}^{(j)}, \sum_{j = 1}^{|\DS_i|} {x_{1}^{(j)}}^2, \cdots,  \sum_{j = 1}^{|\DS_i|} {x_{n}^{(j)}}^2 \big) $. \\
3. Server and the users {\rm in} $\us_0$ execute the aggregation protocol $\pi_{\textsc{DeA}}$ with inputs $\{\Xbu_u\}_{u \in \us_0}$ and obtains $(\us_1, \Xbu_{sum}) \leftarrow \pi_{\textsc{DeA}} \big( \us_0, \{\Xbu_{u}\}_{u \in \us_0}, |\us_0| , 2t \big) $
where $\Xbu_{sum} = \sum_{u \in \us_1} \Xbu_{u} = (X_1, \cdots, X_n, X_{n+1}, \cdots, X_{2n})$.   
If $|\us_1| \geq 2t$, the server computes the mean $\mubu $ over $\DS = \cup_{u \in \us_1} \DS_{u}$ as $\mubu =  (\mu_1, \cdots, \mu_n)$ with $ \mu_j = \frac{1}{d_1}X_j$ and $d_1 = |\us_1| \cdot \ell$, 
and the standard deviation $\sigmabu$ vector as $\sigmabu = (\sigma_1, \cdots \sigma_n)$ with $\sigma_j = \sqrt{\frac{1}{d_1} X_{n+j} - (2d_1-1)\mu_j^2 }, 1 \leq j \leq n$. 
Server sends $\mubu$ and $\sigmabu$ {\rm to} all users {\rm in} $\us_1$. \\
4. After receiving $\mubu$ and $\sigmabu$, each user $P_i$ {\rm in } $\us_1$ scales its dataset as $\xbu \leftarrow \frac{\xbu - \mubu}{\sigmabu}, \xbu \in \mathcal{D}_i$. \\ 
 $//$\textbf{{ Iterative training process}}\\
5. Server randomly initializes the model $\thetabu$\\
6. for $j = 1 ... R $ do \\ 
  - Let $\vs \subseteq \us_1$ be the set of alive users at the current time. If $|\vs| < M$, abort the protocol.\\
  - Server chooses $\vs_j \leftarrow$ (randomly choosing $M$ users from $\vs$) at time $T_j$. $//$ \texttt{Gradient computation at time interval $[T_j, T_j+\Delta]$}\\
  - Server encrypts the regression model $\thetabu$ and then broadcasts the encrypted model $\E(\thetabu)$ {\rm to} the users {\rm in} $\vs_j$. \\
  - for each user $P_i \in \vs_j$ in {\bf parallel} do \\
   Server runs the SLG computation protocol $\pi$ with $P_i$. Server receives $\E(\sbu_{i})$ and $P_i$ receives $\rbu_{i}$ as output where $(\E(\sbu_{i}), \rbu_{i}) \leftarrow \pi(\thetabu, \DS_i)$ and $\pi \in \{\pi_{\textsc{LinSLG}}, \pi_{\textsc{LogSLG}}\}$.\\ 
  \t endfor\\
  - Let $\vs_j' \subseteq \vs_j$ be the set of users who successfully completed the SLG computation protocol and sent $\E(\sbu_{i})$ {\rm to} the server. If $|\vs_j'| < (t+\rho)$, abort the protocol. \\
  - Server and users in $\vs_j' $ run the aggregation protocol $\pi_{\textsc{DeA}}$ and complete the protocol execution {\rm in} time $(T_j+\Delta)$. Server receives $(\vs_j'', \rbu_{sum}) \leftarrow \pi_{\textsc{DeA}} \big(\vs_j', \{\rbu_u\}_{u \in \vs_j'}, |\vs_j'| , t\big)$. 
  If $|\vs_j''| < (t+\rho)$, abort the protocol. \\
  - Server decrypts $\E(\sbu_u)$ {\rm for} each $u \in \vs_j''$, and then computes $\sbu_{sum} = \sum_{u \in \vs_j''} \sbu_{u}$. \\
  - Server computes $\omegabu$ as $\omegabu = \sbu_{sum} - \rbu_{sum}$ and updates the model as  $\thetabu \leftarrow \thetabu - \frac{\eta}{ d_a} \cdot \omegabu$ with $d_a = \sum_{v \in \vs_j''} d_v = |\vs_j''|\cdot \ell$. \\
endfor \\
7. Server stores the regression model $\thetabu$.
}}
\vspace{-0.35cm}
\caption{Privacy-preserving training protocol for a linear or logistic regression model over a mobile network.} 
\label{fig:LinRegTrainingProtocol}
\end{center}
\vspace{-0.25cm}
\end{figure*}

\par\noindent{\bf Efficiency.} It is easy to verify that the training protocol is correct if the users provide their true inputs and the server does not alter the model. 
 Since the aggregation protocol is executed $(R+1)$ times, the users establish the pairwise keys using a Diffie-Hellman protocol, coordinated by the server, as shown in \cite{SecureAggrePPML},  
and then in each execution of $\pi_{\textsc{DeA}}$ the users derive a pair of one-time pairwise keys from the master pairwise keys using a hash-chain, as shown in \cite{secure-agg-cacr2018}. 
Thus, the key establishment phase is executed only once. 
The computational complexity of the aggregation protocol $\pi_{\textsc{DeA}}$ with $m$ users and $n$ dimensional inputs for a user is $O(m^2 + mn)$, and for the server is $O(m^2n)$. 
The computational complexity for one execution of the global gradient computation phase involves the computational complexities of $\pi_{\textsc{LinSLG}}$ or $\pi_{\textsc{LogSLG}}$ and  $\pi_{\textsc{DeA}}$, which is 
$O(4t^2 + 2tn + nd_i) = O(m^2 + mn + nd_i)$ when $t = \lceil \frac{m}{3}\rceil$. If the training phase takes $R$ iterations to obtain a model, the overall computational complexity of a user  is $O(R(m^2 + mn + nd_i))$. 
The computational complexity of the server for the training phase  is $O(R(m^2n + mnd_i))$ which includes the computational complexities of the scaling phase and $R$ executions of the privacy-preserving gradient computation phase. 

The communication complexity of each user includes the costs of the DH key exchange protocol coordinated by the server, the scaling phase and the $R$ iterations of $\pi_{\textsc{LinSLG}}$ or $\pi_{\textsc{LogSLG}}$ and $\pi_{\textsc{DeA}}$. When we choose $t = \lceil \frac{m}{3} \rceil$, the communication complexity of $\pi_{\textsc{DeA}}$,  $\pi_{\textsc{LinSLG}}$ and $\pi_{\textsc{LogSLG}}$ is $O(m+n)$, $O(n)$ and $O(2n+3d_i)$, respectively. 
Therefore, the overall communication complexity of each user is $O(R(m+n))$ for linear regression and  $O(R(m + n + d_i))$ for logistic regression, respectively. 
The server's communication complexity is $O(R(m^2+mn))$ for linear regression, and $O(m^2 + Rm(n+d_i)))$ for logistic regression, including the pairwise communication costs of protocols $\pi_{\textsc{LinSLG}}$ or $\pi_{\textsc{LogSLG}}$ and $\pi_{\textsc{DeA}}$. 

The storage overhead of the protocol is dominated by that of the aggregation protocol $\pi_{\textsc{DeA}}$. The storage overhead for each user is $O(m+n)$ for linear regression, and  $O(m+n + d_i)$ for logistic regression, which is for storing secret-shares sent by the other users in $\pi_{\textsc{DeA}}$, the encrypted model and intermediate results for logistic regression. 
As the server needs to store the model, shares of the local gradients, the secret-shares of $\pi_{\textsc{DeA}}$,  and intermediate results of  $\pi_{\textsc{LogSLG}}$, the storage complexity of the server is $O(m^2+mn)$ for linear regression, and $O(m^2+mn+md_i)$ for logistic regression.  

\par\noindent{\bf Security.} 
We consider the security of the training protocols against semi-honest adversaries where three different threat models, namely users-only threat model, server-only threat model and users-server threat model are considered.
The security of the training protocols are proved in the simulation paradigm using the hybrid arguments. 
In Theorem~\ref{thm:SecProofTrain}, we present that a semi-honest adversary corrupting at most $t$ parties including the server and a set of at most $(t-1)$ users can learn no information about the honest users' datasets. 
The security of the regression training protocols is achieved by that of the SLG protocols, the aggregation protocol and the aggregation privacy game defined in Appendix~\ref{sec:AggPrivGame}. 
Due to the space limit, a formal security proof is provided in Appendix~\ref{sec:TrainSecProof}.
\begin{theorem}[Privacy in Users-Server Threat Model]\label{thm:SecProofTrain}
The protocols $\pi_{\textsc{\emph{LinTrain}}}$ and $\pi_{\textsc{\emph{LogTrain}}}$ are secure in the presence of semi-honest adversaries, meaning they leak no information about the honest users' inputs $\DS_i$ with $|\DS_i| \geq 1$ to the adversary corrupting the server and a set of users of size up to $(t-1)$.
\end{theorem}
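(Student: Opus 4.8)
The plan is to establish security in the simulation paradigm by building a single probabilistic polynomial-time simulator $\siml$ for the joint view of the adversary $\A$ corrupting the server $S$ together with a coalition $\Cp \subseteq \us$ of at most $(t-1)$ users. The simulator is handed the inputs of all corrupted parties (the model $\thetabu$ held by $S$ and the datasets $\{\DS_u\}_{u \in \Cp}$) together with $S$'s legitimate output, namely the sequence of global gradients $\omegabu$ (equivalently, the model trajectory) produced over the $R$ iterations. I would then argue, through a sequence of hybrids starting from the real execution, that the simulated view is computationally indistinguishable from the real one, which is exactly the statement that $\A$ learns nothing about the honest datasets $\{\DS_i\}_{i \notin \Cp}$ beyond what the per-iteration aggregate gradient already reveals.

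The hybrids replace the protocol's sub-components one at a time by their simulators. After simulating the key-establishment transcript from public Diffie--Hellman messages (trivial in the semi-honest model), I would first replace the data-scaling invocation of $\pi_{\textsc{DeA}}$ by its simulator fed with the corrupt inputs and the aggregate $\Xbu_{sum}$; indistinguishability follows from the security of the aggregation protocol under the aggregation privacy game. Next, iteration by iteration, I would replace each parallel execution of the shared-local-gradient protocol on an honest user by the server-side simulator guaranteed by Theorem~\ref{thm:LinSLG} (for $\pi_{\textsc{LinTrain}}$) or Theorem~\ref{thm:LogSLG} (for $\pi_{\textsc{LogTrain}}$); because $\sbu_u = \omegabu_u + \rbu_u$ with $\rbu_u$ sampled uniformly by the honest user, the decrypted share $\sbu_u$ that $S$ recovers is itself uniform, so the simulator samples it at random and encrypts it. Finally, within each iteration I would replace the share-reconstruction call to $\pi_{\textsc{DeA}}$ by its simulator, forced to output $\rbu_{sum} = \sbu_{sum} - \omegabu$ where $\sbu_{sum} = \sum_{u} \sbu_u$ is determined by the simulated shares and $\omegabu$ is taken from the output. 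Each swap is indistinguishable by the security of the corresponding sub-protocol, and the end-to-end simulation consumes only corrupt inputs and the output.

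The hard part will be arguing that the two quantities $S$ genuinely obtains --- the individually decrypted shares $\{\sbu_u\}$ and the single aggregate $\rbu_{sum}$ --- do not jointly reveal any honest user's local gradient $\omegabu_u$, and hence (by the leakage analysis of Section~\ref{sec:PrivLeakage}) any honest $\DS_u$. The crucial observation is that although $S$ learns each $\sbu_u$, it learns only the sum of the masks, never an individual $\rbu_u$; since $\Cp$ can strip off at most its own $(t-1)$ masks from $\rbu_{sum}$ and the surviving set $\vs_j''$ has size at least $(t+\rho)$, at least $\rho+1$ honest masks remain in the sum. Thus each honest $\omegabu_u = \sbu_u - \rbu_u$ stays perfectly hidden behind a uniform mask, and the only honest-dependent value $\A$ can reconstruct is the honest aggregate $\sum_{u \text{ honest}} \omegabu_u$, computed over at least $(\rho+1)\ell > \epsilon$ honest data points. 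Invoking the $\epsilon$-privacy of the aggregate-sum (the aggregation privacy game of Appendix~\ref{sec:AggPrivGame}) then guarantees this aggregate leaks nothing about any single honest $\DS_u$ --- precisely where the constraints $M \ge 2t$, $t=\lceil m/3\rceil$, and $\rho=\lceil \epsilon/\ell\rceil$ are consumed.

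A last bookkeeping step is to handle user dropout and cross-iteration consistency: since $\A$ is semi-honest and the schedule is synchronous, the dropout pattern and the sets $\vs_j, \vs_j', \vs_j''$ (and the associated abort conditions) are honestly generated functions of the corrupt view, so the simulator simply reproduces them. Composing the hybrids across all $R$ iterations and applying a standard hybrid lemma over the polynomially many sub-protocol replacements yields $\real \approx \ideal$, which completes the proof.
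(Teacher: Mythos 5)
Your proposal takes essentially the same route as the paper: the paper's proof also works in the $(\func_{\textsc{SLG}},\func_{\textsc{Agg}})$-hybrid model, replacing the DH key agreement, the scaling aggregation, the per-user SLG executions (exploiting that $\sbu_u=\omegabu_u+\rbu_u$ is uniform because $\rbu_u$ is), and the share-reconstruction aggregation one hybrid at a time across the $R$ iterations, and it closes the argument exactly where you do, by invoking the $\epsilon$-privacy of the aggregate-sum over the surviving honest users (its Lemma~\ref{lem:GradNoLeak} is your final paragraph). The only cosmetic differences are that the paper sources the simulated honest shares by feeding fabricated datasets consistent with the scaling aggregate into $\func_{\textsc{SLG}}$ rather than sampling $\sbu_u$ uniformly outright, and it bounds the number of honest survivors by $|\vs_j''\setminus\Cp|\ge\rho$ rather than your $\rho+1$.
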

\vspace{-0.1cm}
In Theorem~\ref{thm:SecProofUserOnlyTrain}, we show that a semi-honest adversary corrupting at most $(t-1)$ users in the training phase learns no information about the honest server's model. 
Intuitively, since the model is encrypted using a semantically secure additive HE scheme, the model privacy is protected against semi-honest users.
\vspace{-0.2cm}
\begin{theorem}[Privacy in Users-only Threat Model]\label{thm:SecProofUserOnlyTrain}
The protocols $\pi_{\textsc{\emph{LinTrain}}}$ and $\pi_{\textsc{\emph{LogTrain}}}$ are secure in the presence of semi-honest adversaries, meaning they leak no information about the honest server's model $\thetabu$ to the adversary corrupting a set of users of size up to $(t-1)$. 
\end{theorem}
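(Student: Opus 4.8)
The plan is to prove this via the standard simulation paradigm for semi-honest security: I would construct a probabilistic polynomial-time simulator $\siml$ that, on input only the corrupted coalition's datasets $\{\DS_i\}_{i \in C}$ (with $|C| \le t-1$) and the public parameters, produces a transcript computationally indistinguishable from the joint view of the corrupted users in a real execution of $\pi_{\textsc{LinTrain}}$ (resp.\ $\pi_{\textsc{LogTrain}}$). The guiding observation is that the only protocol messages depending on $\thetabu$ are ciphertexts under the additive HE scheme, and the corrupted users never hold the secret key (it is held solely by the honest server). Hence every such message can be replaced by an encryption of a fixed dummy plaintext, and indistinguishability follows from semantic security.

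First I would decompose the corrupted users' view into: (i) the Diffie--Hellman key-establishment transcript; (ii) the scaling-phase messages, namely their own $\Xbu_u$, the $\pi_{\textsc{DeA}}$ transcript, and the broadcast statistics $\mubu,\sigmabu$; and (iii) for each of the $R$ training rounds, the broadcast encrypted model $\E(\thetabu)$, the SLG subprotocol transcript, and the $\pi_{\textsc{DeA}}$ transcript aggregating the $\rbu_u$ shares. The simulator handles (i) honestly, since the DH transcript is independent of $\thetabu$; for the $\pi_{\textsc{DeA}}$ portions in (ii) and (iii) it invokes the simulator guaranteed by the assumed security of $\pi_{\textsc{DeA}}$, noting that the server is honest in this threat model so $\thetabu$ enters none of these aggregation messages, and $\mubu,\sigmabu$ are functions of the data alone. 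The substantive part is (iii): the simulator replaces $\E(\thetabu)$ by a componentwise encryption of $\mathbf{0}$ and, in the logistic case, replaces the server-to-user ciphertexts $\E(z_j^2),\E(\sigma_3(z_j))$ likewise, then runs the SLG simulators furnished by Theorem~\ref{thm:LinSLG} and Theorem~\ref{thm:LogSLG}, which already produce the user side of the SLG view without knowledge of $\thetabu$.

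Next I would establish indistinguishability through a hybrid argument. Starting from the real execution, I would define a sequence of hybrids that replace the $\thetabu$-carrying ciphertexts by dummy ciphertexts one round (or one ciphertext) at a time across all $R$ iterations; by a reduction to the semantic-security game, each adjacent pair of hybrids is distinguishable only with negligible advantage. Since there are polynomially many such ciphertexts (on the order of $R(n+1)$ for linear and $R(n+d_i)$ for logistic), a standard hybrid summation bounds the total advantage by a polynomial multiple of the semantic-security advantage, hence negligible. The coalition aspect comes for free: even a set of up to $t-1$ colluding users holds no decryption key, so their joint view is merely a collection of ciphertexts together with key-independent auxiliary data, and semantic security applies verbatim to the combined view.

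The main obstacle I anticipate is the interactive logistic round, where the user sends the masked $\E(z_j)$ with $z_j=\thetabu\cdot\xbu^{(j)}+c_j$ and receives $\E(z_j^2),\E(\sigma_3(z_j))$ computed by the server on the decrypted value. One must argue carefully that this round leaks nothing about $\thetabu$, despite the server operating on a plaintext that functionally depends on $\thetabu$; the key point is that the user never decrypts, so every returned quantity is a ciphertext whose contents are hidden by semantic security, while the mask $c_j$ is the user's own randomness. Verifying that substituting dummy return ciphertexts does not disturb the subsequent homomorphic computation of $\E(\sigma_3(\thetabu\cdot\xbu^{(j)}))$ in the user's view --- itself only ever an undecryptable ciphertext eventually handed back to the server --- is the technical crux, and it reduces cleanly to the SLG security already proved in Theorem~\ref{thm:LogSLG}.
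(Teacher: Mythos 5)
Your proposal is correct and follows essentially the same route as the paper: the paper proves this theorem by reusing the hybrid simulation from Theorem~\ref{thm:SecProofTrain}, whose central step (\textbf{Hyb 3}) is exactly your replacement of $\E(\thetabu)$ by $\E(\mathbf{0})$ under semantic security, combined with the user-side SLG simulators of Theorems~\ref{thm:LinSLG} and~\ref{thm:LogSLG} that likewise substitute dummy ciphertexts for $\E(z_j^2)$ and $\E(\sigma_3(z_j))$ in the logistic round. The only cosmetic difference is that the paper's sketch foregrounds how the simulator fabricates honest users' inputs consistent with the revealed $\mubu,\sigmabu$, which you handle equivalently by invoking the $\pi_{\textsc{DeA}}$ simulator with the known scaling-phase output.
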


\subsection{Oblivious Regression Prediction Protocols}
We consider a scenario where, after training the model, users wish to use the trained regression model as predictions as a service, which is quite natural because the model was trained on their datasets, 
or the server wishes to offer regression predictions as a service to other clients.  
An oblivious regression prediction protocol is run between the server and a user. In this computation, the server holds a secret regression model $\thetabu$, and a user has a private input $\xbu$. 
An oblivious regression prediction protocol allows the user only to learn $h(\thetabu, \xbu)$ without revealing $\xbu$ to the server and $\thetabu$ to the users.

We assume that each user has a public and private key pair, denoted by $(pk_{P_i}, sk_{P_i})$, of an additive homomorphic encryption scheme and its public key is known to the server.
For oblivious predictions, the user sends an encrypted input to the server. For linear regression, the server computes $\E_{pk_{P_i}}(\thetabu\cdot\xbu)$ from $\E_{pk_{P_i}}(\xbu)$ and $\thetabu$. 
For logistic regression, our oblivious prediction protocol uses the same idea on a single input as used in the SLG computation protocol. 
Figures~\ref{fig:OblLinRegProtocol} and~\ref{fig:OblLogRegProtocol} present the oblivious regression prediction protocols for linear and logistic regressions, respectively. 

\par\noindent{\bf Efficiency.}
For the linear regression prediction, a user needs to perform $n$ encryptions to encrypt the input $\xbu$ of dimension $n$ and one decryption to decrypt the result. 
The server needs to perform $n$ ciphertext multiplications, $n$ homomorphic constant multiplications and one encryption. 
On the other hand, for the logistic regression prediction, the user need to perform $(n+2)$ encryptions, and  two decryptions. 
The server needs to perform $(n+3)$ ciphertext multiplications, $(n+2)$ constant multiplications and three encryptions. 

The communication cost of a user for the oblivious prediction is measured in terms of the amounts of bits the user needs to exchange. 
The amount of bits a user needs to exchange for linear regression is $(n+1)\lambda$, and for the logistic regression is $(n+4)\lambda$. 

\begin{figure}[h]
\vspace{-0.25cm}
\begin{center}
\fbox{
\procedure[syntaxhighlight=auto,space=auto,width=7.5cm, mode=text]{\textbf{Oblivious Predicion: Linear Regression} ($\pi_{\textsc{LinOP}}$)}{
{\bf Server: } Private model $\thetabu$, 
{\bf User ($P_i$): } Private input $\xbu$ \\
{\bf Output:} $P_i$ receives $h(\thetabu, \xbu)$.\\
[][\hline] \\
1. User $P_i$ encrypts its input $\xbu$ and sends $\E_{pk_{P_i}}(\xbu)$ {\rm to} the server\\ 
2. Server computes $\E_{pk_{P_i}}(h(\thetabu, \xbu)) \leftarrow \prod_{j= 1}^n \E_{pk_{P_i}}(x_j)^{\theta_j}\cdot \E_{pk_{P_i}}(\theta_0)$ where $h(\thetabu, \xbu) = \xbu\cdot \thetabu$. It sends $\E_{pk_{P_i}}(h(\thetabu, \xbu))$ {\rm to} $P_i$.\\
3. $P_i$ decrypts $\E_{pk_{P_i}}(\xbu\cdot \thetabu)$ and obtains $h(\thetabu, \xbu)$.
}}
\vspace{-0.35cm}
\caption{Oblivious linear regression prediction protocol.} 
\label{fig:OblLinRegProtocol}
\end{center}
\vspace{-0.2cm}
\end{figure}
\begin{figure}[h]
\vspace{-0.3cm}
\begin{center}
\fbox{
\procedure[syntaxhighlight=auto,space=auto,width=7.5cm, mode=text]{\textbf{Oblivious Prediction: Logistic Regression} ($\pi_{\textsc{LogOP}}$)}{
{\bf Server: } Private model $\thetabu$, 
{\bf User ($P_i$): } Private input $\xbu$ \\
{\bf Output:} $P_i$ receives $\sigma_3(\thetabu\cdot\xbu)$.\\
[][\hline] \\
1. User $P_i$ encrypts $\xbu$ and sends $\E(\xbu)$ {\rm to} the server\\ 
2. Server randomly generates $r$, and computes $\E(z) \leftarrow \prod_{j= 1}^n \E_{pk_{P_i}}(x_j)^{\theta_j}\cdot \E_{pk_{P_i}}(\theta_0)\cdot \E_{pk_{P_i}}(r), \text{ where } z = \thetabu\cdot\xbu+r$ and sends $\{\E_{pk_{P_i}}(z)\}$ to the user \\
3. $P_i$ decrypts $\E_{pk_{P_i}}(z)$ and computes $\E_{pk_{P_i}}(z^2), \E_{pk_{P_i}}(\sigma_3(z))$ and sends {\rm to} the server \\
4. Server computes the following:\\
\t\t $t = \E_{pk_{P_i}}(-(h_q(r)+(q_0 + 3q_3r^3))) \cdot \E_{pk_{P_i}}(\thetabu\cdot\xbu)^{- (2q_2r - 6q_3r^2)}$\\
\t\t $\E_{pk_{P_i}}(\sigma_3(\thetabu\cdot\xbu)) = \E_{pk_{P_i}}(\sigma_3(z))  \E_{pk_{P_i}}(z^2)^{- 3q_3r}  \cdot t$\\
5. $P_i$ decrypts $\E_{pk_{P_i}}(\sigma_3(\thetabu\cdot\xbu))$ and obtains $\sigma_3(\thetabu\cdot\xbu)$. 
}}
\end{center}
\vspace{-0.35cm}
\caption{Oblivious logistic regression prediction protocol.} 
\label{fig:OblLogRegProtocol}
\vspace{-0.45cm}
\end{figure}

\par\noindent{\bf Security.} It is shown that the attacks on the model can be performed when used as prediction services e.g., in \cite{ModelStealingAttack,ModelInversionAttacks} 
due to querying the model multiple times to access the prediction APIs. We assume that no information is leaked about the model through the prediction APIs under a bounded number of queries. 
As countermeasures proposed in \cite{ModelStealingAttack,ModelInversionAttacks} to prevent leakage, which is out of the scope of this work. 
Theorem~\ref{thm:OPResult} summarizes the security of regression prediction protocols. 
\begin{theorem}\label{thm:OPResult}
The protocols $\pi_{\textsc{\emph{LinOP}}}$ and $\pi_{\textsc{\emph{LogOP}}}$ are secure in the presence of semi-honest adversaries, meaning they leak no information about the model $\thetabu$ to a semi-honest user, and no information $\xbu$ to a semi-honest server, 
except what can be learnt obviously. 
\end{theorem}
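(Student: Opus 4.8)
The plan is to prove security separately in the two corruption scenarios — a corrupt (semi-honest) user and a corrupt (semi-honest) server — by exhibiting simulators $\siml$ in the standard real/ideal simulation paradigm, one for each protocol ($\pi_{\textsc{LinOP}}$, $\pi_{\textsc{LogOP}}$) and each corrupted party. Since in both protocols everything is encrypted under the user's key $pk_{P_i}$ and only the user holds $sk_{P_i}$, the two directions rely on different tools: semantic security handles the corrupt server (who never decrypts), whereas a re-randomization and one-time-pad argument handles the corrupt user (against whom semantic security is useless, as it holds the decryption key). In each case the ideal functionality delivers $h(\thetabu,\xbu)$ (resp.\ $\sigma_3(\thetabu\cdot\xbu)$) to the user and nothing to the server, and correctness of the homomorphic identities used in the protocols guarantees that the user's real decryption equals this output.

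First I would treat the corrupt server. Its view consists only of ciphertexts received under $pk_{P_i}$ — namely $\E_{pk_{P_i}}(\xbu)$ in $\pi_{\textsc{LinOP}}$, and additionally $\E_{pk_{P_i}}(z^2),\E_{pk_{P_i}}(\sigma_3(z))$ in $\pi_{\textsc{LogOP}}$ — and the server obtains no output. The simulator $\siml_S$, on input $\thetabu$, outputs fresh encryptions of dummy plaintexts (say $\mathbf{0}$) under $pk_{P_i}$ in place of each received ciphertext. Indistinguishability follows from the semantic security of $\E()$ by a standard hybrid argument that swaps one received ciphertext at a time, so that any distinguisher yields an IND-CPA adversary against $\E()$. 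This establishes that the server learns nothing about $\xbu$.

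Next I would treat the corrupt user, where I cannot invoke semantic security and must instead argue that every message the user receives is distributed as a fresh encryption of a value the simulator can compute from the user's input $\xbu$ and its prescribed output. For $\pi_{\textsc{LinOP}}$ the user receives a single ciphertext formed as $\prod_{j}\E_{pk_{P_i}}(x_j)^{\theta_j}\cdot\E_{pk_{P_i}}(\theta_0)$, and the fresh term $\E_{pk_{P_i}}(\theta_0)$ re-randomizes the product so that, conditioned on its plaintext $h(\thetabu,\xbu)$, the ciphertext is distributed exactly as a fresh encryption, independent of $\thetabu$; hence $\siml_{P_i}$ simply outputs $\E_{pk_{P_i}}(h(\thetabu,\xbu))$. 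For $\pi_{\textsc{LogOP}}$ the user additionally decrypts $z=\thetabu\cdot\xbu+r$, but since the server draws $r$ uniformly and independently, $z$ is uniform over $\Z_{2^k}$ and independent of $\thetabu\cdot\xbu$ even conditioned on the output; thus $\siml_{P_i}$ samples $z'$ uniformly, sets $\E_{pk_{P_i}}(z')$ as the first received message, derives the user's outgoing $\E_{pk_{P_i}}(z'^2),\E_{pk_{P_i}}(\sigma_3(z'))$ exactly as the honest user would, and simulates the final received ciphertext as a fresh $\E_{pk_{P_i}}(\sigma_3(\thetabu\cdot\xbu))$, again relying on the server's fresh additive term $\E_{pk_{P_i}}(-(h_q(r)+(q_0+3q_3r^3)))$ to re-randomize the output ciphertext. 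In both protocols the user thereby learns exactly its output, matching the ``except what can be learnt obviously'' caveat.

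The main obstacle I expect is the re-randomization step: because the simulators against the user know only plaintexts, I must show that each homomorphically evaluated ciphertext it receives is distributed identically (or statistically close) to a fresh encryption of the same plaintext, independent of the server's secret $\thetabu$ and its random $r$. This is precisely why each such message carries a freshly encrypted additive term ($\E_{pk_{P_i}}(\theta_0)$, $\E_{pk_{P_i}}(r)$, and $\E_{pk_{P_i}}(-(h_q(r)+(q_0+3q_3r^3)))$): I would argue that when the encryption randomness forms a group and a uniform fresh randomizer is multiplied in, the aggregate randomness is uniform, giving perfect re-randomization (which holds for Paillier- and JL-type schemes); absent this, I would invoke a statistical re-randomization (circuit-privacy) property of $\E()$ and conclude statistical indistinguishability. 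Assembling the four simulators and their indistinguishability arguments then yields the theorem.
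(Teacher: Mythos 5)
Your proposal is correct and follows essentially the same route as the paper's proof: the semantic security of $\E()$ protects $\xbu$ against the semi-honest server, while the uniformity of the masked value $z=\thetabu\cdot\xbu+r$ and the fact that the user's remaining incoming ciphertexts decrypt only to its prescribed output protect $\thetabu$ against the semi-honest user. The one place you go beyond the paper is in making the simulators explicit and in flagging that the homomorphically evaluated ciphertexts the user receives must be re-randomized (via the fresh additive encryption terms, or a circuit-privacy property of $\E()$) so that they are distributed as fresh encryptions of the output --- a point the paper's much terser proof silently assumes.
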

\begin{proof}
 It is easy to see that both protocols output the correct result if the parties are honest. 
 We will first consider the privacy of the user input. In both protocols, the server observes an $n$-dimensional ciphertext vector, which is an encryption of $\xbu$ under a semantically secure HE scheme, and hence it learns nothing about the input. 
 Thus, the input privacy of the user is protected.
 
 We now consider the privacy of the model against a semi-honest user. For $\pi_{\textsc{LinOP}}$, the user observes the output $h(\thetabu, \xbu)$ after encryption, thus it learns no information about the model after one query. 
 In $\pi_{\textsc{LogOP}}$, the user observes $z$ after decrypting $\E(z)$ and learns nothing about $z$ as $z =  \thetabu\cdot\xbu+r$ is random where $r$ is chosen uniformly at random by the server. 
 Thus the user learns only $\sigma_3(\thetabu\cdot\xbu)$. 
\end{proof}

\section{Experimental Evaluation}
In this section, we evaluate the performance of \regsys\ under semi-honest adversaries, providing 128-bit security. 
We first provide implementation details and then show how to deal with floating-point numbers when conjunct the regression algorithms with cryptographic primitives.  
Finally, we present experimental results of \regsys\ on real-world datasets from the UCI ML repository. 
\vspace{-0.25cm} 
\subsection{Implementation and Dataset Details}
\par\noindent{\bf Implementation details.} 
We have implemented \regsys\ in C using GMP \cite{gmp} for large number operation, the FLINT library \cite{flint} for efficient secret sharing implementations and OpenSSL \cite{openssl} for implementing an authenticated channel.  
We choose the Joye-Libert (JL) cryptosystem \cite{JL-cryptosystem} to instantiate the additive homomorphic encryption scheme, over Paillier \cite{Paillier}, and Bresson et al. \cite{Bresson}  cryptosystems as its ciphertext size is 2$\times$ smaller.    
If the server is a cloud provider (e.g., Google, Microsoft, Apple), with the computation ability of the server, we can save the communication cost by a factor of two, which is quite reasonable for mobile applications. 
We implement the Joye-Libert cryptosystem using GMP. 
%
We implement an aggregation protocol that is a compilation of Bonawitz et al.'s \cite{SecureAggrePPML} and Mandal et al.'s \cite{secure-agg-cacr2018} protocols that handle user dropouts. 
The following crypto-primitives are used to implement the aggregation protocol under semi-honest adversaries. 
\begin{itemize}
 \item[--] We use the NIST P-256 curve from \textsf{OpenSSL} for the Elliptic-curve Diffie-Hellman (ECDH) pairwise key establishment. 
\item[--]  We use  \textsf{AES} in counter mode implemented using \textsf{AES-NI} instructions to implement \textsf{PRG}. 
\item[--] We use  \textsf{AES-GCM} from \textsf{OpenSSL} to implement a secure channel.  
\item[--] We use \textsf{SHA256} to  derive one-time pairwise keys. 
\item[--]  We implement Shamir's threshold secret sharing (\texttt{tss}) schemes, namely $(2,3)$-\texttt{tss} and  $(t,m)$-\texttt{tss} operations using the \textsf{FLINT} polynomial operation library, with NIST's 256-bit prime $q$.    
\end{itemize}
We have chosen the message space $\Z_{2^k}$ with $k = 256$ where $q < 2^k$, which is enough for handling high-dimension data used in our experiment.  
Our experiments were conducted on a desktop with a 3.40GHz Intel i7 and 12 GB RAM. The codes were compiled using \textsf{gcc 5.4.0} with \textsf{-std=c99 -O1 -fomit-frame-pointer} flag.

\par\noindent{\bf Datasets.}\label{sec:uci-dataset} 
We evaluate the performance of \regsys\ on 11 different real-world datasets from the UCI ML repository \cite{CancerDataset}. 
We use 6 different datasets with various sizes and dimensions, summarized in Table~\ref{tab:linear-reg-dataset}, for evaluating the linear/ridge regression training protocols, and 5 different datasets, summarized in Table~\ref{tab:logistic-reg-dataset}, for logistic regression training protocols. 
The dimension of the dataset ranges from 8 to 22 for linear regression and from 9 to 32 for logistic regression. 
For linear regression, we use the root mean squared error (RMSE) to measure the predictive accuracy of the model. 
For logistic regression, we measure the accuracy of the model using the standard method by computing a confusion matrix. 
Note that no privacy-preserving mechanism is used to compute the accuracy of the model. 

\begin{table}[h]
 \vspace{-0.15cm}
 \centering
 \caption{Dataset used in our experiment for linear/ridge regression models.}
 \label{tab:linear-reg-dataset}
 \vspace{-0.35cm}
 \begin{tabular}{c c c c c }\Xhline{2\arrayrulewidth}
 \textbf{Id} & \textbf{Name} & $n$ & $d$ & {\bf Reference} \\ \Xhline{2\arrayrulewidth}
 1 & Auto MPG  & 8 & 392 & \cite{AutoMPG} \\
 2 & Boston Housing Dataset & 14 & 506 & \cite{BostonHousing} \\
 3 & Energy Efficiency & 9 & 768 & \cite{EnergyEfficiency} \\
 4 & Wine Quality & 12 & 1,599 & \cite{WineDataset} \\ 
 5 & Parkinsons Telemonitoring & 22 & 5,875 & \cite{Telemonitoring} \\
 6 & Bike Sharing Dataset & 12 & 17,379 & \cite{BikeSharing} \\
  \Xhline{2\arrayrulewidth}
 \end{tabular}
 \vspace{-0.1cm}
\end{table}

\begin{table}[h]
 \vspace{-0.15cm}
 \centering
 \caption{Dataset used in our experiment for logistic regression model. $\dagger$ 20 out of 90 features were chosen.}
 \label{tab:logistic-reg-dataset}
  \vspace{-0.35cm}
 \begin{tabular}{c c c c c }\Xhline{2\arrayrulewidth}
 \textbf{Id} & \textbf{Name} & $n$ & $d$ & {\bf Reference} \\ \Xhline{2\arrayrulewidth}
 7 & Breast Cancer Dataset  & 32 & 454 & \cite{CancerDataset} \\
 8 & Credit Approval Dataset &  14 & 652 & \cite{CreditDataset} \\
 9  & Diabetes Dataset & 9 &  768 & \cite{DiabetesDataset} \\
 10 & Credit Card Clients & 24  & 30,000 & \cite{CreditCardClient} \\ 
 11 & US Census Income Dataset & $20^{\dagger}$ & 48,842  & \cite{CensusDataset,CensusIncomeDataset} \\
  \Xhline{2\arrayrulewidth}
 \end{tabular}
 \vspace{-0.1cm}
\end{table}

\begin{table}[h]
 \vspace{-0.15cm}
 \centering
 \caption{Number of the users and their training dataset size. Accuracy of the trained regression models achieved by $\regsys$.}
 \label{tab:DatasetPartition}
  \vspace{-0.35cm}
 \begin{tabular}{c c c c c c c c }\Xhline{2\arrayrulewidth}
  \multicolumn{4}{c}{Linear regression}  & \multicolumn{4}{c}{Logistic regression} \\ \hline
  ID & $m$ & $d_i$ & \textbf{RMSE} & ID & $m$ & $d_i$ & \textbf{Score} (\%) \\ \hline
  1 & 28 & 10 & 3.16 & 7 & 32 & 10 & 96.00  \\ 
  2 & 36 & 10 & 4.91 & 8 &  46 & 10 &  87.60 \\
  3 & 54 & 10 & 3.85 & 9 & 54 & 10 &  76.48 \\
  4 & 112 &  10 &   0.68 & 10 & 700 &  30 & 80.72   \\
  5 & 206 & 20 & 3.45 & 11 &  1140 & 30 &  89.60 \\
   6 & 609 & 20 & 147.80 &  &  & &   \\
  \hline
 \end{tabular}
 \vspace{-0.2cm}
\end{table}
\vspace{-0.4cm}
\subsection{Dealing with Floating Point Numbers}
In regression algorithms, the datasets and the model parameters are floating point numbers, both positive and negative, but the cryptographic techniques namely additive HE and secret sharing work over the finite ring of integers.
We provide the details about encoding floating point numbers to elements of $\Z_{2^{k}}$ and vice-versa via decoding. 

\par\noindent{\bf Encoding and decoding.} The encoding operation is applied on floating point numbers before performing cryptographic operations. 
As the message space is $\Z_{2^k}$, we divide the message space into two halves: the positive numbers are in $[0, 2^{k-1}-1]$, and the negative numbers are in $[2^{k-1}, 2^{k}-1]\equiv [-2^{k-1}, -1]$.
We convert each floating point number to an element of $\Z_{2^{k}}$ while maintaining its precision. 
Given an absolute floating point number $x$ in $\xbu^{(i)}$, the corresponding ring element, denoted as $\tilde{x}$ in $\Z_{2^{k}}$, is computed as $\tilde{x} = \texttt{FE}(x, \tau) = \texttt{round} (x \cdot 2^{\tau})$,
and each floating point number $y$ in $\{ y^{(i)}\}$ is converted to a ring element as $\tilde{y} = \texttt{FE}(y, 2\tau)$. If $x$ is negative, the corresponding ring element is $\tilde{x} = 2^k - \texttt{FE}(x, \tau)$, 
and similarly for $y$ in $\{ y^{(i)}\}$. 

Given $\tilde{x} \in \Z_{2^k}$, the decoding of $\tilde{x}$ is given by $x = \texttt{FD}(\tilde{x}, \tau) = -\frac{2^{k}-z}{2^{\tau}}$ if $\tilde{x} \geq 2^{k-1}$, otherwise $\frac{z}{2^{\tau}}$. 


\par\noindent{\bf Evaluating inner product.} 
Given $\thetabu \in \R^{n+1}$ and $\xbu \in \R^{n}$ and the corresponding vectors in $\Z_{2^k}$ are $\tilde{\thetabu} \in \Z_{2^k}^{n+1}$ and $\tilde{\xbu} \in \Z_{2^k}^{n}$ where 
$\tilde{\theta_0} = \texttt{FE}(\theta_0, 2\tau)$, $\tilde{\theta_i} = \texttt{FE}(\theta_i, \tau)$ and $\tilde{x_i} = \texttt{FE}(x_i, \tau)$. Then, $\tilde{\thetabu}\cdot\tilde{\xbu} = 2^{2\tau} \thetabu\cdot\xbu$. 
Thus, $\thetabu\cdot\xbu = \texttt{FD}(\tilde{\thetabu}\cdot\tilde{\xbu})$. If $\tilde{y} =  \texttt{FE}(-y, 2\tau)$, then $\thetabu\cdot\xbu - y = \texttt{FD}(\tilde{\thetabu}\cdot\tilde{\xbu} + \tilde{y})$. 

\par\noindent{\bf Evaluating sigmoid.}
We approximate the sigmoid function $\sigma(x)$ over $[-l, l]$ for some $l$ to a cubic polynomial as $\sigma_3(x) = c_0 + c_1 x + c_2 x^2 + c_3x^3$. 
Note that the coefficients of the polynomial are public. 
To evaluate $\sigma_3(z)$ over $\Z_{2^k}$, we convert the coefficients of $\sigma_3(x)$ as $q_0 = \texttt{FE}(c_0, 7\tau)$, $q_1 = \texttt{FE}(c_1, 5\tau)$, $q_2 = \texttt{FE}(c_2, 3\tau)$, and $q_3 = \texttt{FE}(c_3, \tau)$, i.e.,  $q_i = \texttt{FE}\big(c_i, (7-2*i)\tau\big)$ and 
$z \in \R$ as $\tilde{z} = \texttt{FE}(x, 2\tau)$. Then $\widetilde{\sigma_3(z)} = 2^{7\tau} \sigma_3(z)$, this implies $\sigma_3(z) = \texttt{FD}(\widetilde{\sigma_3(z)}, 7\tau)$, where $\tau$ is chosen so that there is no overflow in the message space $\Z_{2^k}$.

 \vspace{-0.25cm}
\subsection{Experimental Results}
This section presents the performance of \regsys\ where we report the timings, communication costs, and storage overhead for training and oblivious evaluation of linear and logistic regression algorithms for each user and the server.   
As the model privacy and data privacy while considering users dropping out in mobile applications have not been considered in previous work, e.g., \cite{SecureAggrePPML,secure-agg-cacr2018}, 
we do not compare the performance of \regsys\ with others in numerical values. However, we provide a system goalwise comparison in Section~\ref{sec:RelatedWork}.

\par\noindent {\bf Micro-benchmarking.} The additive homomorphic encryption and the aggregation protocol are two main operations that are frequently performed in the protocol. 
We perform micro-benchmarks that measure the timings of the basic operations, namely vector encryption, decryption and constant multiplication operations for the \textsf{JL} cryptosystem including floating point encoding and decoding operations, and the aggregation protocol for a user and the server to understand the deeper insight about the performance of the overall protocol. Table~\ref{tab:MicroBeanchmark} presents the timings for HE operations and the aggregation protocol. 
\begin{table}[h]
 \centering
  \vspace{-0.25cm}
 \caption{Time in milliseconds (ms) (using a single CPU) for vector encryption, decryption and const. multiplication of the \textsf{JL} cryptosystem and the aggregation protocol.}
 \label{tab:MicroBeanchmark}
  \vspace{-0.4cm}
  \resizebox{8.5cm}{!}{
 \begin{tabular}{ | l | l | l | l | l | l |}\cline{2-6}
  \multicolumn{1}{c}{}                 & \multicolumn{5}{|c|}{{\bf Vector dimension} ($n$)} \\ \hline
  {\bf Operations} & 10 & 20 & 30 & 40 & 50 \\ \hline
  Encryption ($\E(\xbu)$) & 8.39 & 16.05 & 24.05 & 32.35 & 39.83 \\ \hline
  Decryption ($\D(\xbu)$) & 4,027.02 & 8,131.76 & 12,134.78 & 16,178.63 & 20,107.34  \\ \hline
  Const. multiplication ($\E(\xbu)^{\zbu}$) & 60.56 & 128.78 & 181.83 & 248.58 &301.96  \\ \hline \cline{2-6}
   \multicolumn{1}{|c}{ {\bf Aggregation protocol} ($\pi_{\textsc{DeA}}$)}  & \multicolumn{5}{|c|}{{\bf Number of users ($m$)}} \\ \hline
 {\bf Dropout = 25\%}, $n = 30$ & 50 & 100 & 150 & 200 & 250 \\ \hline
 User time & 20.38 & 41.45 & 63.43 & 85.86 & 107.10 \\ \hline
 Server time & 34.79 & 161.36 & 420.90 & 823.35 & 1437.02 \\ \hline
 \end{tabular}}
  \vspace{-0.35cm}
\end{table}

\par\noindent {\bf Timing for training regression models.} For each dataset, we shuffle the dataset and  use approximately 70\% for training and 30\% for testing to calculate the accuracy of the model. The training dataset is then distributed into a set of $m$ users and each user holds an equal number of data points, given in Table~\ref{tab:DatasetPartition}. 
We randomly choose $2t$ users out of the $m$ users for their participations in computing the global gradient in the training phase where $t = \lceil \frac{m}{3} \rceil$ and $m$ is the total number of users. 
In our experiment, we set the number of dropout users in the gradient computation to $\delta = \lceil\frac{t}{2}\rceil$, and choose the aggregation size large enough to protect users' inputs privacy. For each dataset, the experiments were repeated five times, except the datasets Credit Card Clients and US Census Income datasets. 

We compare the accuracy of the trained model obtained using the privacy-preserving training protocol with the trained model obtained using the \texttt{sklearn} tool  (in clear, no security) for each dataset. Our achieved accuracy is very closed to the no security one ( which is close to the state-of-the-art accuracy). The number of iteration needed to achieve such accuracy is $R = 350$ iterations for linear/ridge regression and $R  = 300$ iterations for logistic regression, respectively. Figures~\ref{fig:LinTrainUserTime} and~\ref{fig:LogTrainUserTime} present the timings for training the linear and logistic regression models per user. Figure~\ref{fig:LinTrainServerTime} and~\ref{fig:LinTrainServerTime} show the server's timings for training the linear and logistic regression models.  All experiment results were obtained using a single CPU. In our experiment, no communication time is considered. 

For instance, to train the parkinsons telemonitoring data for a linear regression model, a user elapses about 105 milliseconds (ms) to compute the global gradient, and in the worst case, it elapses about 170 seconds (sec) for the entire training process. The ``worst case" is because of the fact that a user may be chosen a maximum of $R$ times.  On the other hand, the server's computation time is about 10 sec for each global gradient computation, and in total about 56.57 mins. 
To train the credit card clients data using a logistic regression model, a user elapses about 1066 ms to compute the global gradient, and in the worst case, a user elapses about 320 sec for the entire training process. The server's computation time is about 99 mins for each global gradient computation, and in total about  495.3 hours, which is because of performing a total of $4,219,200$ \textsf{JL} decryptions and 300 executions of $\pi_{\textsc{DeA}}$ with $2t$ users and $t/2$ dropouts where $t = 234$. {\it By exploiting parallelism using 24 CPUs, the training time for the server can be reduced to approximately 20 hours (estimated using the timings of unit operations).}  

For instance, according to our experimental settings, to train a linear model on the bike sharing dataset,  \regsys's computational overheads for the server and each user, compared to a normal system, are $2.7\times 10^5$ and $1.2 \times 10^6$, resp., where the normal system (naive implementation) provides no model and data privacy.
Similarly, for training a logistic regression model on the credit card clients dataset, \regsys's computational overheads for the server and each user are  $1.4\times 10^8$ and $3.2 \times 10^5$, resp.. The server's high computational overhead is due to the JL decryption algorithm.

Note that the communication latencies among users or the server interactions during the protocol execution are not included in our experiment. However, the total execution time of the training protocol will be the computation times plus the communication latencies.
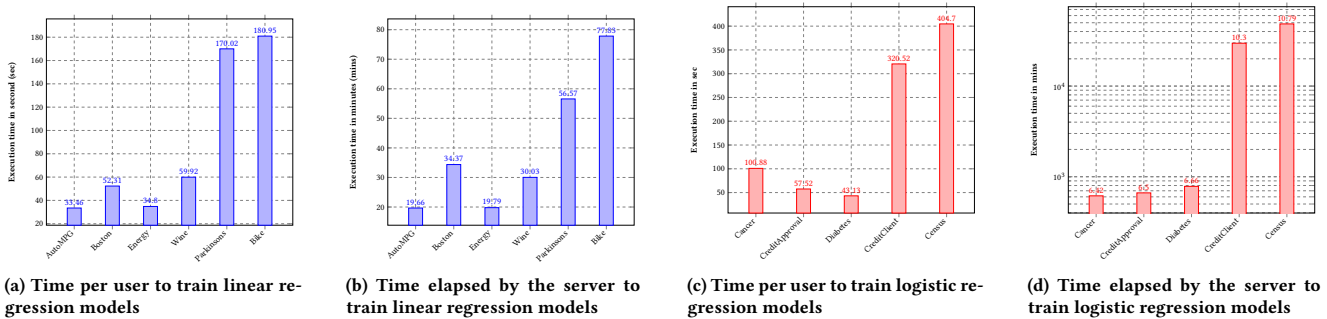
\begin{figure*}
 \vspace{-0.65cm}
    \centering
\subfloat[Time per user to train linear regression models]{\begin{tikzpicture}[scale=0.37]
\pgfplotsset{grid style={dashed,gray}}
\begin{axis}[
       ybar=-0.5cm,
        height=9cm, 
        bar width=0.5cm,
        ylabel={{\bf Execution time in second (sec)}},
      symbolic x coords={AutoMPG,Boston,Energy,Wine,Parkinsons,Bike},
        x tick label style={rotate=45, anchor=east, align=left},
        nodes near coords,
        nodes near coords align={vertical},
        grid=both, 
        enlarge x limits={abs=1cm}
        ]
 \addplot[blue,fill=blue!30!white] coordinates { (AutoMPG, 33.46)};
 \addplot[blue,fill=blue!30!white]  coordinates { (Boston, 52.31)};
  \addplot[blue,fill=blue!30!white]  coordinates { (Energy, 34.80 )};
   \addplot[blue,fill=blue!30!white]  coordinates { (Wine, 59.92)};
    \addplot[blue,fill=blue!30!white]  coordinates {(Parkinsons, 170.02) };
     \addplot[blue,fill=blue!30!white]  coordinates { (Bike, 180.95)};
\end{axis}
\end{tikzpicture}
\label{fig:LinTrainUserTime}
}%
   \qquad
\subfloat[Time elapsed by the  server to train linear regression models ]{\begin{tikzpicture}[thick, scale=0.37]
\pgfplotsset{grid style={dashed,gray}}
\begin{axis}[
       ybar=-0.5cm,
        height=9cm, 
        bar width=0.5cm,
        ylabel={{\bf Execution time in minutes (mins)}},
      symbolic x coords={AutoMPG,Boston,Energy,Wine,Parkinsons,Bike},
        x tick label style={rotate=45, anchor=east, align=left},
        nodes near coords,
        nodes near coords align={vertical},
        grid=both, 
        enlarge x limits={abs=1cm}
        ]
 \addplot[blue,fill=blue!30!white] coordinates { (AutoMPG, 19.66)};
 \addplot[blue,fill=blue!30!white]  coordinates { (Boston, 34.37)};
  \addplot[blue,fill=blue!30!white]  coordinates { (Energy, 19.785)};
   \addplot[blue,fill=blue!30!white]  coordinates { (Wine, 30.03)};
    \addplot[blue,fill=blue!30!white]  coordinates {(Parkinsons, 56.57) };
     \addplot[blue,fill=blue!30!white]  coordinates { (Bike, 77.83)};
\end{axis}
\end{tikzpicture}
\label{fig:LinTrainServerTime}
}%
 \qquad
  \subfloat[Time per user to train logistic regression models]{\begin{tikzpicture}[scale=0.37]
  \pgfplotsset{grid style={dashed,gray}}
 \begin{axis}[
       ybar=-0.5cm,
        height=9cm, 
        bar width=0.5cm,
        ylabel={{\bf Execution time in sec}},
        symbolic x coords={Cancer,CreditApproval,Diabetes,CreditClient,Census},
        x tick label style={rotate=45, anchor=east, align=left},
        nodes near coords,
        nodes near coords align={vertical},
       grid=both, 
        enlarge x limits={abs=1cm},
         xtick={Cancer,CreditApproval,Diabetes,CreditClient,Census},
        ]
   \addplot[red,fill=red!30!white] coordinates { (Cancer, 100.88)};
       \addplot[red,fill=red!30!white] coordinates { (CreditApproval,  57.516)};
         \addplot [red,fill=red!30!white]coordinates { (Diabetes,  43.133) };
           \addplot[red,fill=red!30!white] coordinates { (CreditClient, 320.52) };
             \addplot[red,fill=red!30!white] coordinates {(Census,  404.70) };
 \end{axis}
  \end{tikzpicture}
  \label{fig:LogTrainUserTime}
  }%
     \qquad
  \subfloat[Time elapsed by the  server to train logistic regression models]{\begin{tikzpicture}[thick, scale=0.37]
  \pgfplotsset{grid style={dashed,gray}}
  \begin{axis}[
       ybar=-0.5cm,
        height=9cm, 
        bar width=0.5cm,
        ylabel={{\bf Execution time in mins}},
        symbolic x coords={Cancer,CreditApproval,Diabetes,CreditClient,Census},
        x tick label style={rotate=45, anchor=east, align=left},
        nodes near coords,
        nodes near coords align={vertical},
        grid=both, 
        enlarge x limits={abs=1cm},
       ymode=log,
        xtick={Cancer,CreditApproval,Diabetes,CreditClient,Census},
        ]
       \addplot[red,fill=red!30!white] coordinates { (Cancer, 615.56)};
       \addplot[red,fill=red!30!white] coordinates { (CreditApproval, 663.71 )};
         \addplot [red,fill=red!30!white]coordinates { (Diabetes, 783.18 ) };
           \addplot[red,fill=red!30!white] coordinates { (CreditClient, 29719.37) };
             \addplot[red,fill=red!30!white] coordinates {(Census, 48519.24 ) };
 \end{axis}
  \end{tikzpicture}
  \label{fig:LogTrainServerTime}
 }%
 \vspace{-0.35cm}
 \caption{Time in seconds for a user and in minutes for the server to train linear and logistic regression models.} 
  \label{fig:MobileServerTiming}%
\vspace{-0.25cm}
\end{figure*}

\vspace{-0.3cm} 
\par\noindent {\bf Timing for oblivious regression predictions.}
We also measure the performance of the oblivious evaluations of linear and regression models on the aforementioned datasets. The performance of the oblivious predictions depends on the dimension of the dataset. 
Table~\ref{tab:ObliviousPredTime} summarizes the timings of a user and the server for obliviously evaluating linear and logistic regression models. The experiments are repeated ten times for each point. From Table~\ref{tab:ObliviousPredTime}, we can observe that the time for the user is larger than that of the server, which is due to the cost of the JL decryption operation. 
\begin{table}[h]
 \vspace{-0.15cm}
 \centering
 \caption{Time in milliseconds for oblivious regression prediction for the server and the user.} 
 \label{tab:ObliviousPredTime}
  \vspace{-0.3cm}
 \begin{tabular}{c c  c c c  c }\Xhline{2\arrayrulewidth}
  \multicolumn{3}{c}{Linear regression}  & \multicolumn{3}{c}{Logistic regression} \\ \hline
  ID &  \textbf{User} & \textbf{Server} & ID & \textbf{User}  & \textbf{Server} \\ \hline
  1 &  430.15 & 4.12 & 7 & 935.04 &  11.07  \\ 
  2 & 437.35 & 6.06 & 8 &  891.69 &  10.51  \\
  3 & 436.55  & 3.04 & 9 & 883.78 &  6.52  \\
  4 &  443.94 &  4.32  & 10 & 893.12  & 15.85  \\
  5 &  446.65 & 8.21 & 11 &  950.56 & 36.95  \\
   6 & 422.17  & 3.22 &  &   &   \\
  \hline
 \end{tabular}
 \vspace{-0.2cm}
\end{table}
\par\noindent {\bf Communication and storage cost.} 
For training a linear regression model, the communication cost in the training phase involves receiving the encrypted model, transmitting the encrypted share of the local gradient and information exchange in the share reconstruction phase. For training a logistic regression model, in addition to the linear regression's communication cost, one more round of information exchange is required in the shared local gradient computation phase. 
Figure~\ref{fig:TrainingBitsTransmission} presents the maximum amount of bits a user needs to transfer to accomplish the training phase for the choice of the parameters and the achieved accuracy in Table~\ref{tab:DatasetPartition}.  
Note that the communication cost for a user depends on how many times the user is chosen in the training phase. 
  For instance, for training the parkinsons telemonitoring data, in the worst case, a user need to exchange 12.05 Megabyte (MB) of data to train the linear regression model. 
On the other hand, for the credit card clients data, in the worst case, a user need to exchange 36.10 Megabyte (MB) of data to train the logistic regression model. 
 The communication cost for the server for each dataset is provided in Figure~\ref{fig:ServerTrainingBitsTransmission}. 
The storage overheads for each user and the server is presented in Figures~\ref{fig:UserStorage} and~\ref{fig:ServerStorage}, respectively.  

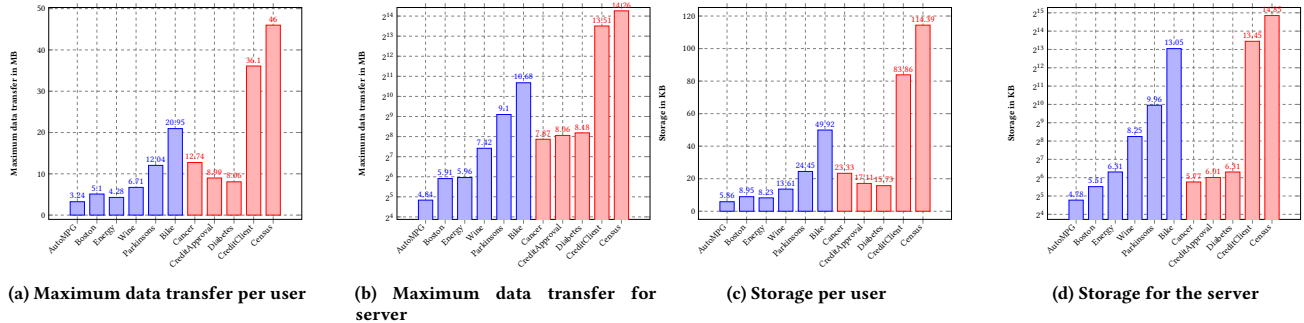
\begin{figure*}
 \vspace{-0.55cm}
    \centering
\subfloat[Maximum data transfer per user]{\begin{tikzpicture}[scale=0.38]
\pgfplotsset{grid style={dashed,gray}}
\begin{axis}[
       ybar=-0.5cm,
        height=9cm, 
        bar width=0.5cm,
        ylabel={{\bf Maximum data transfer in MB}},
        symbolic x coords={AutoMPG,Boston,Energy,Wine,Parkinsons,Bike,Cancer,CreditApproval, Diabetes,CreditClient,Census},
        x tick label style={rotate=45, anchor=east, align=left},
        nodes near coords,
        nodes near coords align={vertical},
        grid=both, 
        enlarge x limits={abs=1cm},
        xtick={AutoMPG,Boston,Energy,Wine,Parkinsons,Bike,Cancer,CreditApproval, Diabetes,CreditClient,Census},
        legend pos=north west,
        ]
      \addplot[blue,fill=blue!30!white] coordinates { (AutoMPG, 3.240509)};
 \addplot[blue,fill=blue!30!white]  coordinates { (Boston, 5.100647)};
  \addplot[blue,fill=blue!30!white]  coordinates { (Energy, 4.279510)};
   \addplot[blue,fill=blue!30!white]  coordinates { (Wine, 6.710663)};
    \addplot[blue,fill=blue!30!white]  coordinates {(Parkinsons, 12.041107) };
     \addplot[blue,fill=blue!30!white]  coordinates { (Bike, 20.948334)};
     
      \addplot[red,fill=red!30!white] coordinates { (Cancer, 12.742432)};
       \addplot[red,fill=red!30!white] coordinates { (CreditApproval,  8.988098)};
         \addplot [red,fill=red!30!white]coordinates { (Diabetes,  8.063690) };
           \addplot[red,fill=red!30!white] coordinates { (CreditClient, 36.103851) };
             \addplot[red,fill=red!30!white] coordinates {(Census,  46.00) };
\end{axis}
\end{tikzpicture}
\label{fig:TrainingBitsTransmission}
}%
   \qquad
  \subfloat[Maximum data transfer for server]{\begin{tikzpicture}[scale=0.38]
  \pgfplotsset{grid style={dashed,gray}}
\begin{axis}[
       ybar=-0.5cm,
        height=9cm, 
        bar width=0.5cm,
        ylabel={{\bf Maximum data transfer in MB}},
      symbolic x coords={AutoMPG,Boston,Energy,Wine,Parkinsons,Bike,Cancer,CreditApproval, Diabetes,CreditClient,Census},
      ymin=0, ymax=22000,
        x tick label style={rotate=45, anchor=east, align=left},
        nodes near coords,
        nodes near coords align={vertical},
        grid=both, 
        enlarge x limits={abs=1cm},
        xtick={AutoMPG,Boston,Energy,Wine,Parkinsons,Bike,Cancer,CreditApproval, Diabetes,CreditClient,Census},
        ymode=log,
        log basis y={2},
        legend pos=north west,
        ]
      \addplot[blue,fill=blue!30!white] coordinates { (AutoMPG, 28.537048)};
 \addplot[blue,fill=blue!30!white]  coordinates { (Boston, 60.045837)};
  \addplot[blue,fill=blue!30!white]  coordinates { (Energy,  62.268951)};
   \addplot[blue,fill=blue!30!white]  coordinates { (Wine, 170.871094)};
    \addplot[blue,fill=blue!30!white]  coordinates {(Parkinsons, 548.057526 ) };
     \addplot[blue,fill=blue!30!white]  coordinates { (Bike, 1638.085236)};
      \addplot[red,fill=red!30!white] coordinates { (Cancer,  233.196716)};
       \addplot[red,fill=red!30!white] coordinates { (CreditApproval,  265.987610 )};
         \addplot [red,fill=red!30!white]coordinates { (Diabetes, 290.693024  ) };
           \addplot[red,fill=red!30!white] coordinates { (CreditClient,  11631.19751) };
             \addplot[red,fill=red!30!white] coordinates {(Census, 19602.246124  ) };
\end{axis}
  \end{tikzpicture}
  \label{fig:ServerTrainingBitsTransmission}
  }%
  \subfloat[Storage per user]{\begin{tikzpicture}[scale=0.38]
\pgfplotsset{grid style={dashed,gray}}
\begin{axis}[
       ybar=-0.5cm,
        height=9cm, 
        bar width=0.5cm,
        ylabel={{\bf Storage in KB}},
      symbolic x coords={AutoMPG,Boston,Energy,Wine,Parkinsons,Bike,Cancer,CreditApproval, Diabetes,CreditClient,Census},
       x tick label style={rotate=45, anchor=east, align=left},
        nodes near coords,
        nodes near coords align={vertical},
        grid=both, 
        enlarge x limits={abs=1cm},
        xtick={AutoMPG,Boston,Energy,Wine,Parkinsons,Bike,Cancer,CreditApproval, Diabetes,CreditClient,Census},
        legend pos=north west,
        ]
      \addplot[blue,fill=blue!30!white] coordinates { (AutoMPG, 5.859375 )};
 \addplot[blue,fill=blue!30!white]  coordinates { (Boston, 8.953125)};
  \addplot[blue,fill=blue!30!white]  coordinates { (Energy, 8.234375)};
   \addplot[blue,fill=blue!30!white]  coordinates { (Wine, 13.609375 )};
    \addplot[blue,fill=blue!30!white]  coordinates {(Parkinsons,  24.453125) };
     \addplot[blue,fill=blue!30!white]  coordinates { (Bike, 49.921875 )};
      \addplot[red,fill=red!30!white] coordinates { (Cancer, 23.328125)};
       \addplot[red,fill=red!30!white] coordinates { (CreditApproval,  17.109375)};
         \addplot [red,fill=red!30!white]coordinates { (Diabetes,  15.734375) };
           \addplot[red,fill=red!30!white] coordinates { (CreditClient, 83.859375) };
             \addplot[red,fill=red!30!white] coordinates {(Census,   114.390625) };
             
\end{axis}
\end{tikzpicture}
\label{fig:UserStorage}
}%
   \qquad
  \subfloat[Storage for the server]{\begin{tikzpicture}[scale=0.38]
  \pgfplotsset{grid style={dashed,gray}}
\begin{axis}[
       ybar=-0.5cm,
        height=9cm, 
        bar width=0.5cm,
        ylabel={{\bf Storage in KB}},
      symbolic x coords={AutoMPG,Boston,Energy,Wine,Parkinsons,Bike,Cancer,CreditApproval, Diabetes,CreditClient,Census},
      ymin=0, ymax=40000,
        x tick label style={rotate=45, anchor=east, align=left},
        nodes near coords,
        nodes near coords align={vertical},
        grid=both, 
        enlarge x limits={abs=1cm},
        ymode=log,
        log basis y={2},
        xtick={AutoMPG,Boston,Energy,Wine,Parkinsons,Bike,Cancer,CreditApproval, Diabetes,CreditClient,Census},
        legend pos=north west,
        ]
      \addplot[blue,fill=blue!30!white] coordinates { (AutoMPG, 27.375000)};
 \addplot[blue,fill=blue!30!white]  coordinates { (Boston, 45.593750 )};
  \addplot[blue,fill=blue!30!white]  coordinates { (Energy,  79.406250)};
   \addplot[blue,fill=blue!30!white]  coordinates { (Wine, 303.562500)};
    \addplot[blue,fill=blue!30!white]  coordinates {(Parkinsons, 992.375000 ) };
     \addplot[blue,fill=blue!30!white]  coordinates { (Bike,  8441.812500 )};
      \addplot[red,fill=red!30!white] coordinates { (Cancer, 54.408691  )};
       \addplot[red,fill=red!30!white] coordinates { (CreditApproval,  64.408691)};
         \addplot [red,fill=red!30!white]coordinates { (Diabetes,  79.408691) };
           \addplot[red,fill=red!30!white] coordinates { (CreditClient, 11140.382324) };
             \addplot[red,fill=red!30!white] coordinates {(Census,  29525.726074 ) };
\end{axis}
  \end{tikzpicture}
  \label{fig:ServerStorage}
  }%
  \vspace{-0.45cm}
  \caption{Total data exchange in Megabyte (MB) and storage required in Kilobyte (KB)  per user and the server to train linear and logistic regression models.} 
  \label{fig:UserServerTrainingStorageBitsTransmission}%
\vspace{-0.53cm}
\end{figure*}

\par\noindent{\bf Optimizations.} 
It can be observed from the protocol that there are many scopes of paralleling the server's computation (e.g., JL vector decryption, encrypted vector multiplication, secret reconstruction and PRG evaluation, etc). Exploiting such parallelism using multiple CPUs, the server's computational time will be reduced to several magnitudes. While generating the random vector $\rbu$, users can generate it using a PRG. 

\par\noindent{\bf Discussions.} In \regsys, one can take a garbled circuit or purely secret-sharing based approach to implement the shared local gradient protocol. 
As the local gradient computation for linear regression can be expressed as a circuit of multiplicative depth two, a secret-sharing based approach will require at least two rounds of communications plus an additional cost of generating multiplication triplets in an offline phase. 
If the multiplication of two $b$-bit floating point numbers needs approximately $b^{1.58}$ gates (according to the Karatsuba algorithm), 
the computation of the local gradient between the server and a user has a transmission cost of at least $8\times 128\times b^{1.58}\times n = 2^{10} \times n\times b^{1.58}$ plus the transmission cost of the OT protocol using a garbled circuit approach for an $n$-dimensional vector. 
For an accuracy of six decimal places, this cost is much larger than $2n\log_2(N) $ bits for transmitting the encrypted model as well as the local gradient using the \textbf{JL} cryptosystem where $N$ is a modulus of size 3072 bits. 
Similarly, for the logistic regression, the transmission cost is roughly $2^9 \times (2n+1) b^{1.58}$ using the garbled circuit approach, which is larger than $3n\log_2(N)$ for each data point.  
As mobile applications may have low bandwidth and slow connections, a garbled circuit or purely secret-sharing based approach will incur a higher communication overhead compared to an additive HE-based approach.
\vspace{-0.25cm}

\section{Related Work}\label{sec:RelatedWork}
\par\noindent{\bf Privacy-preserving Linear Regression.}	
Privacy-preserving computation of linear regression has received considerable attention. 
Early works \cite{priv-lin-reg-siam2004,priv-reg-kdd2004,regression-mpc2004,Karr-linreg-2009,karr2005secure-lin-reg} have considered learning linear regression model on either horizontally or vertically distributed datasets. 
Hall et al. \cite{hall-jos2011} proposed  protocols for linear regression based on homomorphic encryption techniques. 
Nikolaenko et al. \cite{Nikolaenko:2013SP} proposed a system for privacy-preserving computation of the ridge regression model by combining homomorphic encryption and Yao garbled circuits. 
In a follow-up work, Gasc{\'o}n et al. \cite{LinReg:PETS2017} proposed protocols for linear regression models based on hybrid-MPC techniques and Yao garbled circuits and using the conjugate gradient descent algorithm. 
Bogdanov et al. \cite{rmind-ieee-dsc-2018} developed tools for privacy-preserving linear regression based on secret sharing.  
Other approach for privacy-preserving linear regression is based on fully homomorphic encryption (FHE) scheme \cite{fhe}, which may not be suitable for mobile applications. 
The work of \cite{MLConfidential} can be applied to linear regression for the settings of MLaaS.  

\par\noindent{\bf Privacy-preserving Logistic Regression.} 
Logistic regression is  an essential technique to  classify  data. Aono et al. \cite{LogisReg} proposed a system for both training and predicting data using logistic regression relying on  additive  homomorphic  encryption where 
they considered the computation outsourcing scenario in which a server computes the logistic regression model and sends an encrypted model to the user. 
They also showed how to make the system differential privacy enabled. 
Bonte and Vercauteren \cite{LogisReg-HE} explored secure training for logistic regression using somewhat homomorphic encryption in the computation outsourcing scenario where the training is done using Newton-Raphson method. 
Kim et al. \cite{LogisticReg-HE-MedInform} also investigated the training phase of the logistic regression model, using somewhat homomorphic encryption scheme, based on gradient descent algorithm and the  Taylor series polynomial approximation.  
Zhu et al. \cite{LogisReg-Cloud} presented a secure outsourcing protocol for training and evaluating logistic regression classifier in cloud.  
Kim et al. in \cite{Kim2018} proposed a method to train a logistic regression model based on the approximate homomorphic encryption. There is no satisfactory solution for training an ML model using SWHE or FHE when data come from multiple sources. All these techniques cannot handle the dropout scenario and are not suitable for mobile applications.

\par\noindent{\bf Privacy-preserving Federated Learning.} 
Shokri and Shmatikov \cite{PPDL} presented a scheme for privacy-preserving deep learning. 
Hardy et al. \cite{PriFederated} presented a three-party protocol for logistic regression using additive homomorphic encryption 
where the protocol consists of privacy-preserving entity resolution and federated logistic regression and the data is vertically partitioned. Their protocol cannot handle the dropout scenario, and 
privacy is not ensured when the dataset contains only one point. Note that their work have not considered the model privacy.  Our solution is more general compared to theirs. 
Fioretto and Hentenryck in \cite{FedDataSharing} proposed a protocol for federated data sharing to use under the framework of differential privacy, which allows the users to release a privacy-preserving version of the dataset. 
This dataset can be used to train various predictors for linear regression, logistic regression, and support vector machines.
Liu et al. \cite{FedTransLearning} proposed a technique for privacy-preserving federated transfer learning. 
Truex et al. \cite{FedLearning-HE-DP2018} presented an approach for private federated learning, decision tree, neural networks that provides data  privacy  guarantees  using  differential  privacy  and threshold homomorphic encryption schemes.

\par\noindent{\bf Generic Secure ML Systems.}
Systems, namely \textsf{SecureML} \cite{SecureML:SP}, \textsf{Prio} \cite{Prio-Usenix2017} and \textsf{ABY}$^3$ \cite{ABY3} can be used to perform privacy-preserving linear and logistic regression, 
but such systems need existence of addition servers (other than the server used for coordination) where users secret share their data among a set of servers. 
These systems provide stronger security compared to ours because in \regsys, the server has access to the model in each iteration of the model update, which is quite suitable in the federated setting.  
However, these approaches are orthogonal to the federated learning setting where users send (using secret sharing) their private data to the servers, but in federated learning, users' data never go out of the devices. 
\vspace{-0.3cm}
\section{Concluding Remarks and Future Work}
This paper presented \regsys, a privacy-preserving system for training and oblivious predictions of the predictive models such as linear and logistic regressions in the federated setting, while ensuring dropout robustness and the data as well as the model privacy.  
\regsys\ enables a robust and secure training process by iteratively executing a secure multiparty global gradient protocol built using lightweight cryptographic primitives suitable for mobile applications.  
The security of  \regsys\ is analyzed against semi-honest adversaries. Our experimental results on several real-world datasets demonstrate the practicality of \regsys\ to incorporate in the federated learning system.  

We intend to implement \regsys\ on smartphones to evaluate its efficiency. As today's server platforms are equipped with multiple CPUs, we are working on designing an easy-and-parallel implementation interface at the server-side to reduce the computation time.  While keeping the general protocol flow in Section~\ref{sec:ProtocolFlow} same, we are extending our work to neural networks in a separate paper. As a future work, it would be interesting to investigate the case of users asynchronously participating in the training phase. 
\vspace{0.1cm}
\par\noindent{\bf Acknowledgement.} 
This work is supported by the NSERC Discovery grant. 
The authors would like to thank the anonymous reviewers of CCSW2019 for their insightful comments and suggestions to improve the quality of the paper.  

 \bibliographystyle{acm}
 \bibliography{ref-reg}

%
%
%
%
%
%
\appendix
\vspace{-0.3cm}

\section{Cryptographic Background}\label{sec:CryptoBkgd}
In this section, we provide a background on the crypto-primitives and schemes that are needed for the aggregation protocol. We also present the aggregation privacy  game of Buescher et al. \cite{TwoIsNotEnough}.  
\subsection{PKI and DH Key Agreement} 
\regsys\ requires an existence of a public key infrastructure (PKI) due to the use of the secure aggregation protocol in which the users need to establish two pairwise keys, and each user and the server need to establish a pairwise key, which is done using the Diffie-Hellman (DH) protocol \cite{DHKeyAgreement}. For the users, one pairwise key is used to derive one-time pairwise keys  and the other one is used to realize an authenticated channel with other users to exchange information in the aggregation protocol. 
The users register their identities when they join first time and publish the identities and public keys  in the setup phase, as used in \cite{SecureAggrePPML,secure-agg-cacr2018}.

A key agreement protocol consists of a tuple of three algorithms $(\textsf{ParamGen}, \textsf{KeyGen}, \textsf{KeyAgree} )$. We use an Elliptic curve variant of the DH key agreement protocol. Given a security parameter $\kappa$, the parameter generation algorithm $(\Grp, q, G)\leftarrow \textsf{ParamGen}(1^{\kappa})$ samples a group of points on an elliptic curve over a field $\F_{q}$ of order $q$ and a generator $G$. 
The key generation algorithm $(x_A, x_A G)\leftarrow \textsf{KeyGen} (\Grp, q, P)$ generates a secret key $x_A \leftarrow^{\$} \Z_q$ and a public key $G_A = x_A G$.  The  key agreement function $\textsf{KeyAgree}$ computes a pairwise key $K_{AB} = H( G_{AB} ) $ where $G_{AB} \leftarrow  \textsf{KeyAgree} (x_A, G_B) = x_Ax_B G$ and $H$ is a hash function. The security of the key agreement protocol follows from the Decisional Diffie-Hellman (DDH) assumption.  

\vspace{-0.3cm}
\subsection{Authenticated Encryption}
An authenticated encryption scheme consists of three algorithms $(\mathcal{K}, \Enc, \Dec)$ where  $\mathcal{K}$ is a symmetric key generation algorithm, 
$\Enc$ is a symmetric-key encryption algorithm that accepts a key and a message as input and outputs a ciphertext and a tag, and $\Dec$ is a decryption algorithm that takes a key, a ciphertext and a tag, and outputs the original message or $\perp$. We use the \textsf{AES-GCM} that offers the IND-CPA and INT-CTXT securities to realize an authenticated channel.   
\vspace{-0.2cm}
\subsection{Pseudorandom Generator and One-way Function} 
A pseudorandom generator $\textsf{PRG}: \{0,1\}^l \rightarrow \{0,1\}^b$, is a deterministic function which accepts an $l$-bit binary string as input and outputs an $b$-bit binary string with $b>>l$. 
The input to a \textsf{PRG} is chosen uniformly at random. The security of a cryptographically strong $\textsf{PRG}$ is defined as its output bit stream is indistinguishable from a truly random one. 
We used \textsf{AES} in counter mode as a \textsf{PRG} in our experiment.
A one-way function $F: \{0,1\}^m \rightarrow \{0,1\}^{n}$ is a function that is easy to compute and hard to invert where $n \leq m$.  
Security of the one-way function $F$ is the amount of work needed to invert $F$ on an input $x$. 
We use the \textsf{SHA256} hash function to realize a one-way function.   
\vspace{-0.3cm}
\subsection{Threshold Secret Sharing}
Let $\F_q$ be a finite field where $q$ is a prime. A $t$-out-of-$m$ threshold secret sharing scheme ($(t,m)$-tss) over $\F_q$ is a tuple of two algorithms $(\sstm, \rectm)$ where the secret sharing algorithm $\textbf{Share}_m^t$ is a randomized algorithm that takes a secret input $s$ over $\F_q$ and outputs $m$ shares of $s$, i.e., $(s_1, s_2, \dots, s_m) \leftarrow \sstm(s)$ and the reconstruction algorithm $\rectm$ accepts any $t$ shares $(s_{i_1}, s_{i_2}, \cdots, s_{i_t})$ as input and outputs the original $s$. The correctness of the scheme is defined as 
for any $s$, $\displaystyle\textrm{Pr}_{(s_1, \cdots, s_m)\leftarrow \sstm(s)}[ \rectm( s_{i_1}, s_{i_2}, \cdots, s_{i_t} )=s] = 1$. The security of the scheme is defined as any set of less than $t$ users learns nothing about the secret $s$. 
We use Shamir's threshold secret sharing scheme \cite{Shamir:SS} and a $(2,3)$-tss in the secure aggregation protocol.
\vspace{-0.3cm}
\subsection{Aggregation Privacy Game}\label{sec:AggPrivGame}
Buescher et al. \cite{TwoIsNotEnough} formalized the privacy of an aggregation scheme  where the privacy of an aggregate is measured by the chances of an adversary in winning the game in Figure~\ref{fig:AggGame}. 
We use an aggregation protocol that satisfies the aggregation privacy game. 
\begin{figure}[h]
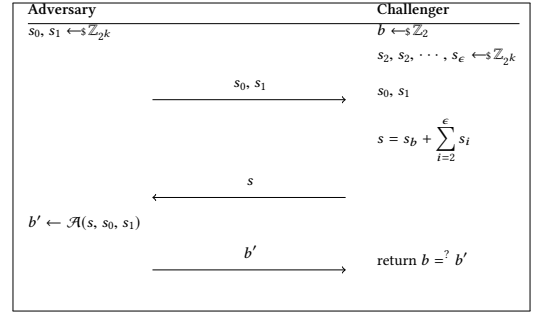

\centering
\vspace{-0.15cm}
\resizebox{7cm}{!}{
\fbox{
\pseudocode{%
{\bf Adversary} \< \< {\bf Challenger}\\ [0.1\baselineskip] [\hline]
s_0, s_1 \sample \Z_{2^k} \<\< b \sample \Z_2\\ 
 \<\< s_2, s_2, \cdots, s_{\epsilon} \sample \Z_{2^k} \\
 \< \sendmessageright*{ s_0, s_1 } \< s_0, s_1 \\
 \<\< s = s_b + \sum_{i = 2}^{\epsilon} s_i \\
 \< \sendmessageleft*{ s } \< \\
 b' \leftarrow \mathcal{A}(s, s_0, s_1) \<\< \\
 \< \sendmessageright*{ b' } \< \text{return } b =^{?} b' \\
}
}
}
\vspace{-0.35cm}
\caption{Privacy game for an aggregation scheme}
\label{fig:AggGame}
\vspace{-0.5cm}
\end{figure}
\begin{definition}
 We say an aggregate-sum is $\epsilon$-\emph{private} if the aggregate $s = \sum_{i = 1}^{\epsilon} s_i$ is secure under the privacy game defined in Figure~\ref{fig:AggGame}, meaning it leaks no information about any individual $s_i$. 
\end{definition}

\section{Security Analysis of \regsys}\label{sec:SecProof}
This section presents the security proofs of the shared local gradient computation protocols in Section~\ref{sec:PPGD}, and the training protocols in Section~\ref{sec:PPTraningProtocol} for linear and logistic regression models.  
\subsection{Security of Shared Local Gradient Computation Protocols}\label{sec:SLGSecProof}
Before providing the proofs, we first define the ideal functionality of the shared local gradient computation protocol in Figure~\ref{fig:IdealSLG}. 
We denote by $\func_{\textsc{LinSLG}}$ and $\func_{\textsc{LogSLG}}$ the ideal functionalities for the SLG for linear and logistic regression models, respectively.  
\begin{figure}[h]
\vspace{-0.3cm}
\begin{center}
\fbox{
\pseudocode[syntaxhighlight=auto,space=auto,width=7.5cm, mode=text]{
\textbf{Inputs} \\
\t\t{\bf Server} ($S$): Model $\thetabu$  \\
\t\t{\bf User} ($P_i$): Dataset $\DS_i$ \\
\textbf{Outputs} \\
\t\t{\bf User} ($P_i$): a random vector $\rbu_i$ \\
\t\t{\bf Server} ($S$): $\sbu_i$ such that $\sbu_i = \texttt{GD}(\thetabu, (\DS_i, |\DS_i|)) + \rbu_i$.
}
}
\vspace{-0.25cm}
\caption{Ideal functionality $\func_{\textsc{SLG}}$ for an SLG computation.} 
\label{fig:IdealSLG}
\end{center}
\vspace{-0.35cm}
\end{figure}

\par\noindent{\bf Proofs of Theorems~\ref{thm:LinSLG} and~\ref{thm:LogSLG}.} We prove the security of two SLG protocols in the simulation paradigm.
\begin{proof}[Proof of Theorem~\ref{thm:LinSLG}]
To prove $\pi_{\textsc{LinSLG}}$ is secure, we show that the adversary's ($\A$) view in the real-world execution of the protocol is indistinguishable from that of the ideal-world execution. 
We assume that a trusted party computes the ideal functionality $\func_{\textsc{LinSLG}}$ in the ideal-world execution of the protocol. We construct two different simulators, denoted by $\siml_S$ for the server $S$, and $\siml_{P_i}$ for the user $P_i$. 
We assume that the server receives the model $\thetabu$ and the user $P_i$ receives $\DS_i$ with size $d_i$ from the environment, and the simulator has access to $\func_{\textsc{LinSLG}}$.
\par\noindent{\bf Server $S$ is semi-honest.}  When $\A$ controls the server, the simulator $\siml_S$ works as follows: 
\begin{itemize}
\item $\siml_S$ receives $\thetabu$ from the environment and sends it to $\func_{\textsc{LinSLG}}$ and receives $\sbu' = \omegabu_i + \rbu' = (s_0', s_1', \cdots, s_n')$. 
\item $\siml_S$ runs $S$ on input $\thetabu$ and receives $\E(\thetabu) = (\E(\theta_0), \cdots, \E(\theta_n))$.
\item $\siml_S$ encrypts $\sbu$ using $S$'s public key and outputs whatever $S$ outputs. 
\end{itemize}
$S$'s view in the real-world execution of the protocol is $\sbu = (s_0, s_1, \dots, s_n)$ after decrypting $\E(\sbu)$ where $$\sbu = \omegabu_i + \rbu= \Big( \sum_{j = 1}^{d_i} e_j + r_0,  \cdots, \sum_{j = 1}^{d_i} e_jx_n^{(j)} + r_n \Big)$$
where $e_i = h(\thetabu, \xbu^{(j)})$. As $\rbu$ is chosen uniformly at random, any coordinate of $\sbu$, i.e., $s_l = \sum_{j = 1}^{d_i} e_jx_l^{(j)} + r_l, 0 \leq l \leq n $ is indistinguishable from random $s_l'$. 
Therefore, the server's view in the real-execution is indistinguishable from the view of the ideal-execution of the protocol. 
\par\noindent{\bf User $P_i$ is semi-honest.}  
When $\A$ controls the user $P_i$, the simulator $\siml_{P_i}$  works as follows.
\begin{itemize}
\item $\siml_{P_i}$ receives $\DS_i$ from the environment and sends it to $\func_{\textsc{LinSLG}}$ and receives $\rbu$. 
\item $\siml_{P_i}$ runs $P_i$ on the input $\DS_i$. 
\item $\siml_{P_i}$ constructs $\E(\textbf{0})$ and sends it to $P_i$ and outputs whatever $P_i$ outputs. 
\end{itemize}
$P_i$'s view in the real-world execution of the protocol is $\E(\thetabu) = ( \E(\theta_0), \cdots, \E(\theta_n) ) $ where each $\E(\theta_i)$ is computationally indistinguishable 
from $\E(0)$, resulting in $\E(\thetabu)$ is indistinguishable from $\E(\textbf{0})$ due to the semantic security of $\E$. 
Thus, the view of the adversary in the real-world execution of the protocol is indistinguishable from that of the ideal-world execution of the protocol.
\end{proof}
\vspace{-0.3cm}
%
\begin{proof}[Proof of Theorem~\ref{thm:LogSLG}]
We provide the constructions of two different simulators, denoted by $\siml_S$ for the server $S$ and $\siml_{P_i}$ for the user $P_i$. 
We assume that the simulator has access to the ideal functionality $\func_{\textsc{LogSLG}}$. 
We assume that the server receives the model $\thetabu$ and the user $P_i$ receives $\DS_i$ with size $d_i$ from the environment. 
\par\noindent{\bf Server $S$ is semi-honest.}  When $\A$ controls the server, the simulator $\siml_S$ works as follows: 
\begin{itemize}
\item $\siml_S$ receives $\thetabu$ from the environment and sends it to $\func_{\textsc{LinSLG}}$ and receives $\sbu' = \omegabu_i + \rbu' = (s_0', s_1', \cdots, s_n')$. 
\item $\siml_S$ runs $S$ on input $\thetabu$ and receives $\E(\thetabu) = (\E(\theta_0), \cdots, \E(\theta_n))$.
\item $\siml_S$ constructs $\big\{ z_j'\big\}_{j = 1}^{d_i}$ where $z_j' = \thetabu \cdot \xbu^{j} + c_j'$ and sends $\big\{\E(z_j')\big\}$ to $S$. Note that the simulator does not know actual values of $\big\{c_j'\big\}$ chosen by $P_i$. 
\item $\siml_S$ receives $\big\{z_j\big\}$ from $S$ and computes $\big\{z_j^2\big\}$ and $\big\{\sigma_3(z_j)\big\}$ as well as $\big\{\E(z_j^2), \E(\sigma_3(z_j))\big\}$ and sends it to $S$ 
\item $\siml_S$ encrypts $\sbu$ using $S$'s public key $pk$ and sends to $S$ and outputs whatever $S$ outputs. 
\end{itemize}
S's view in the real-world execution of the protocol is $\big\{z_j\big\}_{j = 1}^{d_i} \text{ and } \sbu = (s_0, s_1, \dots, s_n)$ after decrypting $\E(\sbu)$ where 
$$ z_j = \thetabu \cdot \xbu^{(j)} + c_j, \text{ and } \sbu = \omegabu_i + \rbu= \Big( \sum_{j = 1}^{d_i} e_j + r_0,  \cdots, \sum_{j = 1}^{d_i} e_jx_n^{(j)} + r_n \Big)$$
with $e_i = h(\thetabu, \xbu^{(j)})$ and $h$ is a logistic regression function. As $\rbu$ is chosen uniformly at random, any coordinate of $\sbu$, i.e., $s_l = \sum_{j = 1}^{d_i} e_jx_l^{(j)} + r_l, 0 \leq l \leq n $ is indistinguishable from $s_l'$. 
Therefore, the server's view in the real-world execution is indistinguishable from that of the ideal-world execution of the protocol. 
\par\noindent{\bf User $P_i$ is semi-honest.}  
When $\A$ controls the user $P_i$, the simulator $\siml_{P_i}$  works as follows.
\begin{itemize}
\item $\siml_{P_i}$ receives $\DS_i$ from the environment and sends it to $\func_{\textsc{LogSLG}}$ and receives $\rbu$. 
\item $\siml_{P_i}$ constructs $\E(\thetabu) = (\E(0), \cdots, \E(0))$ and sends it to $P_i$. 
\item $\siml_{P_i}$ runs $P_i$ on the inputs $\DS_i$ and $\E(\thetabu)$ and obtains $\big\{\E(z_j)\big\}$ where  $\siml_{P_i}$ generates random coins $\{c_j'\}$ for $P_i$. 
\item $\siml_{P_i}$ constructs $\big\{(\E(0), \E(0)) \big\}_{j = 1}^{d_i}$ and sends to $P_i$ and outputs whatever $P_i$ outputs. 
\end{itemize}
$P_i$'s view in the real-world execution of the protocol is given by $\E(\thetabu), \big\{ (\E(z_j^2), \E(\sigma_3(z_j)) \big\}_{j = 1}^{d_i} $,
which is computationally indistinguishable from that of the real-world execution of the protocol, due to the semantic security of $\E$. 
\end{proof}

%

\subsection{Security of Training Protocols}\label{sec:TrainSecProof}
In this section, we prove the security of the training protocol against semi-honest adversaries in the simulation paradigm by providing a construction of a simulator. 
We use an aggregation protocol $\pi_{\textsc{DeA}}$ that securely implements the ideal functionality in Figure~\ref{fig:IdealAgg}.
In \cite{SecureAggrePPML,secure-agg-cacr2018}, the security of the aggregation protocol is proved against semi-honest and active adversaries. 
We only need the security of $\pi_{\textsc{DeA}}$ against semi-honest adversaries. 
The security of the training protocol in the server-only threat model is a special case of Theorem~\ref{thm:SecProofTrain}. So we omit it. 
\begin{figure}[h]
\vspace{-0.3cm}
\begin{center}
\fbox{
\pseudocode[syntaxhighlight=auto,space=auto,width=7.5cm, mode=text]{
\textbf{Inputs} \\
\t\t{\bf Public param}: a set of users $\us$ with $|\us| = m$, a threshold value $t < m$ \\
\t\t{\bf Server} ($S$): Nothing ($\perp$)  \\
\t\t{\bf User} ($P_i$): Private input $\sbu_i, i \in \us$ \\
\textbf{Outputs} \\
\t\t{\bf User} ($P_i$): Nothing ($\perp$) \\
\t\t{\bf Server} ($S$): $\sbu = \sum_{u \in \vs}\sbu_u, \vs \subseteq \us$ if $|\vs| \geq t$ and $\perp$ otherwise. 
}
}
\vspace{-0.25cm}
\caption{Ideal functionality $\func_{\textsc{Agg}}$.}
\label{fig:IdealAgg}
\end{center}
\vspace{-0.35cm}
\end{figure}

\begin{lemma}\label{lem:GradNoLeak}
Assume that the aggregate-sum is $\epsilon$-private. For the regression models, the gradient $\omegabu$ on $\{\DS_i\}_{i \in \mathcal{B}}$ and $\mathcal{B} \subseteq [m]$ with $|\mathcal{B}|\geq 2t$ does not leak any information about an honest user's $\DS_i$ with $|\DS_i| \geq 1$ for a coalition of up to $(t-1)$ users and the server, where $t = \lceil\frac{m}{3}\rceil$. 
\end{lemma}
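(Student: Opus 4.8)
The plan is to show that a gradient computation hands the coalition nothing beyond the \emph{aggregate} of the honest users' local gradients, and then to invoke the assumed $\epsilon$-privacy of the aggregate-sum to conclude that this aggregate conceals each honest user's data. Write $\mathcal{C}\subseteq\mathcal{B}$ for the corrupted users (so $|\mathcal{C}|\le t-1$) and $\mathcal{H}=\mathcal{B}\setminus\mathcal{C}$ for the honest ones; since $|\mathcal{B}|\ge 2t$ we get $|\mathcal{H}|\ge 2t-(t-1)=t+1$. The coalition legitimately holds the model $\thetabu$ (through the corrupted server), the datasets $\{\DS_i\}_{i\in\mathcal{C}}$, and, at the end of the round, the recovered global gradient $\omegabu=\sum_{i\in\mathcal{B}}\omegabu_i$. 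Because each $\omegabu_i=\sum_{(\xbu,y)\in\DS_i}f(\thetabu,(\xbu,y))$ is a deterministic function of $\thetabu$ and $\DS_i$, the coalition can compute $\sum_{i\in\mathcal{C}}\omegabu_i$ on its own and subtract it; hence the only genuinely new quantity it obtains is the honest aggregate $\Omega=\sum_{i\in\mathcal{H}}\omegabu_i=\sum_{j}\gdbu^{(j)}$, summed over every honest data point. It therefore suffices to prove that $\Omega$ reveals nothing about the dataset $\DS_{i^\ast}$ of an arbitrary fixed honest user $i^\ast\in\mathcal{H}$.

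Next I would isolate the target from its mask by writing $\Omega=\omegabu_{i^\ast}+\Omega'$ with $\Omega'=\sum_{i\in\mathcal{H}\setminus\{i^\ast\}}\omegabu_i$; the dependence of $\Omega$ on $\DS_{i^\ast}$ passes only through $\omegabu_{i^\ast}$, so it suffices to show $\Omega$ hides $\omegabu_{i^\ast}$. Instantiating the training protocol's constraint $\ell(M-\delta-c)\ge\epsilon$ with coalition size $c\le t-1$ guarantees that the honest aggregate $\Omega$ collects at least $\epsilon$ per-point contributions $\gdbu^{(j)}$, of which the target user supplies at most one group. I would then reduce to the privacy game of Figure~\ref{fig:AggGame}: an adversary that, given $\Omega$, could distinguish the true $\omegabu_{i^\ast}$ from an independently drawn candidate would win the aggregation privacy game, contradicting $\epsilon$-privacy. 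Embedding the game's challenge into $i^\ast$'s contribution while using the remaining honest terms as the $\epsilon-1$ masking summands makes the reduction concrete and shows that $\Omega$ hides $\omegabu_{i^\ast}$, and hence $\DS_{i^\ast}$.

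The step I expect to be the main obstacle is the \emph{correlation across the coordinates} of the gradient. The leakage identified in Section~\ref{sec:PrivLeakage} is exactly an algebraic one: for a single data point $\omega_l=\omega_0\cdot x_l$, so the joint vector $(\omega_0,\dots,\omega_n)$ discloses $x_l$, and a careless coordinate-by-coordinate appeal to the scalar game would miss this, since one honest point feeds every coordinate at once. I would handle this by taking the whole per-point vector $\gdbu^{(j)}\in\Z_{2^k}^{n+1}$ as the atomic object being masked, i.e.\ by applying a vector-valued instance of the privacy game, so that summing at least $\epsilon$ such vectors destroys the factorization that enabled single-point recovery. The remaining bookkeeping---checking that with $|\mathcal{B}|\ge 2t$, $|\mathcal{C}|\le t-1$, and $t=\lceil m/3\rceil$ the honest summands genuinely reach the $\epsilon$ threshold even when every honest $|\DS_i|=1$---follows directly from the parameter constraints already recorded for the training protocol.
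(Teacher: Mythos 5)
Your proposal follows essentially the same route as the paper's proof: decompose the global gradient into the coalition's own (computable) contribution plus the honest aggregate, observe that the honest aggregate contains at least $t \geq \epsilon$ per-point summands even when every honest $|\DS_i|=1$, and invoke the $\epsilon$-privacy of the aggregate-sum. Your additional care in treating each per-point gradient $\gdbu^{(j)} \in \Z_{2^k}^{n+1}$ as a single vector-valued object in the privacy game is a legitimate tightening of a point the paper handles only coordinate-wise (``and similarly for other components''), and it correctly neutralizes the cross-coordinate factorization $\omega_l = \omega_0 \cdot x_l$ that drives the single-point leakage of Section~\ref{sec:PrivLeakage}.
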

\begin{proof}
For the linear and logistic regression, the local gradient $\omegabu_i$ on $\DS_i$ with size $d_i$ can be written as 
$$\omegabu_i = \Big( \sum_{j = 1}^{d_i} e^{(j)}, \sum_{j = 1}^{d_i} e^{(j)}x_1^{(j)}, \cdots, \sum_{j = 1}^{d_i} e^{(j)}x_n^{(j)} \Big)$$
where $e^{(j)} = (h(\thetabu, \xbu^{(j)})-y^{(j)})$ and $h$ is a linear or logistic regression function. 
The gradient $\omegabu = (\omega_0, \cdots, \omega_n)$ on the dataset $\{\DS_i\}_{i \in \mathcal{B}}$ is given by 
$$\omegabu = \sum_{i \in \mathcal{B}}\omegabu_i = \Big( \sum_{i \in \mathcal{B}}\sum_{j = 1}^{d_i} e^{(j)}, \sum_{i \in \mathcal{B}}\sum_{j = 1}^{d_i} e^{(j)}x_1^{(j)}, \cdots, \sum_{i \in \mathcal{B}}\sum_{j = 1}^{d_i} e^{(j)}x_n^{(j)} \Big).$$
The first component of $\omegabu$ can be written as 
$\omega_0 = \sum_{i \in \mathcal{B}\backslash\Cp}\sum_{j = 1}^{d_i} e^{(j)} + \sum_{i \in \Cp}\sum_{j = 1}^{d_i} e^{(j)}$. 
For a set of corrupted users of size up to $t-1$, the number of terms in  $\sum_{i \in \mathcal{B}\backslash\Cp}\sum_{j = 1}^{d_i} e^{(j)}$ is $\sum_{i \in \mathcal{B}\backslash\Cp}d_i$, and similarly for other components. 
In worst case, when $|\DS_i| = 1$ and the number of the honest users in $\mathcal{B}\backslash\Cp$ is at least $t \geq \epsilon$, the gradient leaks no information about $\omegabu$ due to $\epsilon$-privacy of the aggregate. 
\end{proof}

\par\noindent{\bf Proofs of Theorems~\ref{thm:SecProofTrain} and~\ref{thm:SecProofUserOnlyTrain}.} Below we provide the proofs of the security theorems of the training protocols.  
\begin{proof}[Proof of Theorem~\ref{thm:SecProofTrain}]
We will prove the security of the protocols in $(\func_{\textsc{SLG}}, \func_{\textsc{Agg}})$-hybrid model using the standard hybrid argument. 
For $\pi_{\textsc{LinTrain}}$ and $\pi_{\textsc{LogTrain}}$, the ideal functionality $\func_{\textsc{SLG}}$ is replaced by $\func_{\textsc{LinSLG}}$ and $\func_{\textsc{LogSLG}}$, respectively.  

We provide a construction of a simulator through a sequence of hybrids which are constructed by subsequent modifications and argue that every two subsequent hybrids are computationally indistinguishable. 
The simulator $\siml$ runs the adversary $\A$ internally and provides the corrupted users inputs, and can emulates the honest parties' inputs as the actual inputs of the honest parties are unknown. 
$\siml$ has access to $\func_{\textsc{SLG}}$ and $\func_{\textsc{Agg}}$. We denote by $\Cp$ the set of corrupted parties and $|\Cp| \leq t$. 
\begin{itemize}
 \item[{\bf Hyb 0}:] This hybrid is a random variable corresponding to the joint view of $\A$ in the real-world execution of the protocol. 
 \item[{\bf Hyb 1}:] This hybrid is identically same as the previous one, except the key agreement phase. For the honest users in $\us_0\backslash \mathcal{C}$, instead of using the DH key agreement algorithm (\textsf{KeyAgree}), $\siml$ uses a pair of uniformly random keys for encryption/decryption and one-time key generation. 
 The Decisional Diffie-Hellman assumption ensures that this hybrid is indistinguishable from the previous one. 
 \item[{\bf Hyb 2}:] Note that $\Xbu_{sum} = \Xbu_{sum}^{\Cp} + \Xbu_{sum}^{\us_1\backslash \Cp}$ where 
 $\Xbu_{sum}^{ \Cp}$ is the sum of the corrupted user inputs and $\Xbu_{sum}^{\us_1\backslash \Cp}$ is the sum of the honest users inputs. When $\Xbu_{sum} = \perp$, $\siml$ aborts. 
 In this hybrid, $\siml$ samples $\{Z_u\}_{u \in \us_1\backslash \Cp}$ such that $\sum_{u \in \us_1\backslash \Cp} Z_u = \Xbu_{sum}$. 
 Instead of sending $\{X_u\}_{u \in \us_1\backslash \Cp}$ as inputs for the honest users, $\siml$ sends $\{Z_u\}_{u \in \us_1\backslash \Cp}$ as inputs to $\func_{\textsc{Agg}}$. 
 As $|\us_1\backslash \Cp| \geq t$, the $\epsilon$-privacy of the aggregation-sum ensures that the distributions $\{Z_u\}_{u \in \us_1\backslash \Cp}$ and $\{X_u\}_{u \in \us_1\backslash \Cp}$ are identical, where the number of inputs in the sum of the honest users' inputs is $\ell t \geq \epsilon$. 
 Thus, this hybrid is indistinguishable from {\bf Hyb 1}.
 
 \item[{\bf Hyb 3}:] In this hybrid, we change the encryption of $\thetabu$ by $\E(\textbf{0})$. $\A$'s view in the real-world execution of the protocol contains $\E(\thetabu)$. The semantic security of the encryption scheme ensures that this hybrid is indistinguishable from the previous one. 
 \item[{\bf Hyb 4}:] In this hybrid, for each honest user, the input $\DS_u$ is sampled as $\DS_u'$ s.t. 
 $Z_u = \big( \sum_{j = 1}^{|\DS_u'|} {x_{1}^{(j)'}}, \cdots,  \sum_{j = 1}^{|\DS_u'|} {x_{n}^{(j)'}}^2 \big)$, $u \in \us_1\backslash \Cp$. Note that $\siml$ uses the knowledge of $\{Z_u\}$ in {\bf Hyb 2}. 
 $\siml$ constructs $\vs$ by randomly choosing $M$ users with $M \geq 2t$. $\siml$ aborts if $|\vs| < M$. 
 For each honest user in $\vs\backslash \Cp$, $\siml$ sends $\DS_u'$ and $\thetabu$ to $\func_{\textsc{SLG}}$ and receives a random share $\sbu_u'$. 
 $\siml$ runs polynomial-many $\func_{\textsc{SLG}}$ and obtains $\{ \sbu_u' \}_{u \in \vs\backslash \Cp}$. $\A$'s view in the real-world execution contains 
 $\{ \sbu_u \}_{u \in \vs\backslash \Cp}$ with $\sbu_u = \omegabu_u + \rbu_u$ where $\{\rbu_u\}$ is randomly generated. Thus, the distribution $\{ \sbu_u' \}_{u \in \vs\backslash \Cp}$ is identically distributed to $\{ \sbu_u \}_{u \in \vs\backslash \Cp}$. 
Thus, this hybrid is indistinguishable from {\bf Hyb 3}. 
 \item[{\bf Hyb 5}:] In this hybrid, as  $\rbu_{sum} = \rbu_{sum}^{\Cp} + \rbu_{sum}^{\vs''\backslash \Cp}$, $\siml$ emulates the honest users' inputs $\big\{\rbu_u'\big\}_{u \in \vs''\backslash \Cp}$ such that $\sum_{u \in \vs''\backslash \Cp} \rbu_u' = \sum_{u \in \vs''\backslash \Cp} \rbu_u = \rbu_{sum}^{\vs''\backslash \Cp}$, 
 where $\big\{\rbu_u\big\}_{u \in \vs''\backslash \Cp}$ is randomly distributed, and 
 for the honest users in $\vs'\backslash(\vs''\backslash \Cp)$, it randomly samples arbitrary values and sends all to $\func_{\textsc{Agg}}$. 
 Note that $\omegabu = \omegabu^{\Cp} + \omegabu^{\vs''\backslash \Cp} = \sum_{u \in \Cp} (\sbu_u - \rbu_u) + \sum_{u \in \vs''\backslash \Cp} (\sbu_u - \rbu_u)$, and the number of honest users in $\vs''\backslash \Cp$ is $|\vs''\backslash \Cp| = M-\delta \geq \rho$. 
 Therefore, $\omegabu^{\vs''\backslash \Cp}$ is the sum of at least $\rho\cdot \ell = \epsilon$ inputs. 
 In the real-world execution, any leakage about an individual $\sbu_u$ can happen from the execution of the aggregation protocol with a negligible probability.
 The security of the aggregation protocol and the $\epsilon$-privacy of the aggregate-sum ensures that the distributions $\big\{\rbu_u'\big\}_{u \in \vs''\backslash \Cp}$ and $\big\{\rbu_u\big\}_{u \in \vs''\backslash \Cp}$ are identical and so do $\big\{\omegabu_u\big\}_{u \in \vs''\backslash \Cp}$ and $\big\{\omegabu_u'\big\}_{u \in \vs''\backslash \Cp}$ with $\sum_{u \in \vs''\backslash \Cp} \omegabu_u' = \omegabu^{\vs''\backslash \Cp}$ emulated by $\siml$. 
 Therefore, this hybrid is indistinguishable from the previous one. 
 \item[{\bf Hyb 6}:] We repeat {\bf Hyb 3} to {\bf Hyb 5} sequentially $(R-1)$ times (polynomial many times), and  it is easy to observe that each subsequent modification of the hybrids is indistinguishable, by applying the above arguments.
\end{itemize}
This concludes the construction of the simulator. Thus, the output of the simulator is computationally indistinguishable from the output of the real-world execution of the protocol. Hence the proof follows. 
\end{proof}

\begin{proof}[Proof of Theorem~\ref{thm:SecProofUserOnlyTrain}]
This can be proved by constructing a simulator in a similar way, as of Theorem~\ref{thm:SecProofTrain}. 
We only emphasize the main behavior changes of the simulator and omit other details. Since the adversary corrupts only the set of users in $\Cp$, the joint view of the adversary, after the data scaling phase, contains $\mubu$ and $\sigmabu$, which involves the honest users inputs. 
$\siml$ emulates the honest users' (in $\us_1\backslash \Cp$) inputs $\DS_u'$ such that $\sum_{u \in \Cp} W_u + \sum_{u \in \us_1\backslash \Cp}Y_{u} = d_1 \cdot \mubu$ where $  Y_u = \big( \sum_{j = 1}^{|\DS_u'|} x_{1}^{(j)'}, \cdots,  \sum_{j = 1}^{|\DS_u'|} x_{n}^{(j)'}\big)$ and $  W_u = \big( \sum_{j = 1}^{|\DS_u|} x_{1}^{(j)}, \cdots,  \sum_{j = 1}^{|\DS_u|} x_{n}^{(j)}\big)$ 
and uses the true inputs $\DS_u$ for the corrupted users $u \in \Cp$, and these inputs are used in the rest of the simulation. Note that in the training phase no output is received by the users in $\Cp$. The simulator can use a dummy vector for the sum of the honest users' local gradient $ \omegabu^{\vs''\backslash\Cp}$.
\end{proof}


%
%
%


\end{document}